\documentclass[10pt]{article}
\usepackage{geometry}
\geometry{a4paper}

\usepackage{graphicx}
\usepackage{amssymb}
\usepackage{amsmath}
\usepackage{alg}

\usepackage{pgf,tikz}
\usetikzlibrary{arrows,automata}

\usepackage{amsthm}
\theoremstyle{definition}
\newtheorem{thm}{Theorem}
\newtheorem{dfn}[thm]{Definition}

\newtheorem{prop}[thm]{Proposition}
\newtheorem{obs}[thm]{Observation}
\newtheorem{ex}[thm]{Example}

\newcommand{\setcomp}[2]{\ensuremath{\{#1\ |\ #2\}}}
\newcommand{\set}[1]{\ensuremath{\{#1\}}}

\newcommand{\kleene}[0]{\ensuremath{^*}}

\newcommand{\rnarrow}{\rightarrow}

\newcommand{\commentout}[1]{}

\title{Functional Logic Programming\\with Generalized Circular Coinduction\thanks{I would like to thank Prof.~Dr.~Horst Reichel for his useful input and comments.}}
\author{Ronald de Haan\\[5pt]\textit{Technische Universit\"at Dresden}}
\date{}

\usepackage[]{hyperref}

\usepackage[disable,colorinlistoftodos]{todonotes}

\begin{document}
\maketitle
%%%

\begin{abstract}
We propose a method to adapt functional logic programming to deal with reasoning on coinductively interpreted programs as well as on inductively interpreted programs.
In order to do so, we consider a class of objects interesting for this coinductive interpretation, namely regular terms.
We show how the usual data structures can be adapted to capture these objects.
We adapt the operational semantics of Curry to interpret programs coinductively.
We illustrate this method with several examples that show the working of our method and several cases in which it could be useful.
Finally, we suggest how the declarative semantics can be adapted suitably.
\end{abstract}

%%%
%%% INTRODUCTION
%%%
\section{Introduction}

Within the paradigm of declarative programming, several styles of programming have been developed, among which logic and functional programming. One of the benefits of logic programming is its use of free variables and its inherent capabilities to model search problems. The benefits of functional programming include its efficient execution. Attempts to combine the best of both worlds have resulted in functional logic programming. An example of this combination is the programming language Curry \cite{Hanus:1997p1, Hanus:2007p193}).

Another development within the paradigm of declarative programming is to develop mechanisms to reason about coinductively defined semantics, besides the usual inductively defined semantics. This is useful, for instance, to reason about infinite data, and processes and their states. In the case of logic programming, this resulted in co-logic programming \cite{Simon:2007p225}. Another approach to reason about coinductive semantics concerns proof mechanisms based on circularity \cite{Rosu:2000p309}, which are used in the coinductive prove engine CIRC \cite{Lucanu:2009tc}, for instance. However, in the case of functional logic programming, no such adaptation to a coinductive interpretation has been developed.

In this paper, we will generalize the idea of circular coinduction to be used in the domain of functional logic programming. Similarly to detecting circularity in equality proofs (as is the case in circular coinduction), we will detect circularity in a reduction derivation. By means of assuming a certain resulting value for terms that appear repeatedly in such circular reduction sequences, we can resolve these circularities. Circular coinduction is a particular (Boolean) case of this general mechanism. Hence the name generalized circular coinduction for this method.

In Section~\ref{sec:prelim}, we will define some preliminary notions used in our approach. We define a data structure to capture a particular (finitely representable) class of infinite terms, and we show how to perform several basic operations on these data structures. Then, in Section~\ref{sec:compstrat}, we describe how we propose to modify the computational strategy from \cite{Hanus:1997p1} with generalized circular coinduction. In Section~\ref{sec:examples}, we will show the benefits of this approach by means of several examples. We suggest a possible declarative semantics for the mechanism in Section~\ref{sec:declsem}, before concluding and suggesting directions of further research in Section~\ref{sec:conclusions}.

%%%
%%% PRELIMINARIES
%%%
\section{Preliminaries}\label{sec:prelim}
\subsection{Cyclic terms}
Certain classes of infinite terms can be described finitely. One such class is the class of regular infinite terms. A term is regular if is has only finitely many different subterms. It is easy to see that all finite terms are regular. In this paper we will consider (finite and infinite) finitely-branching regular terms, such as the terms in Example~\ref{ex:regulartrees}. Example~\ref{ex:infinitelybranchingtree} shows a term that is not finitely-branching.

\begin{ex}\label{ex:regulartrees}
An example of a non-regular infinite term is the infinite tree $\Gamma_{1}$, where $\Gamma_{i}$ for $i \in \mathbb{N}$ is defined as follows, for a (fixed) binary constructor symbol $\star$:
\[ \Gamma_{i} = \star(i,\Gamma_{i+1}) \]
The set of subterms $sub(\Gamma_{1}) = \mathbb{N} \cup \setcomp{\Gamma_{j}}{j \in \mathbb{N}}$ of $\Gamma_{1}$ is infinite.
An example of a regular infinite term is the infinite tree $\Delta$, where $\Delta$ is defined as follows, for a (fixed) binary constructor symbol $\star$:
\[ \Delta = \star(1,\Delta) \]
The set of subterms $sub(\Delta) = \set{1, \Delta}$ of $\Delta$ is finite.
\end{ex}

\begin{ex}\label{ex:infinitelybranchingtree}
Another example of a regular infinite term is the infinite tree $\Lambda$, defined as follows, for a (fixed) constructor symbol $\circ$ of infinite arity.
\[ \Lambda = \circ(\Lambda,\Lambda,\ldots) \]
In other words, each of the (infinitely many) direct subterms of the root of $\Lambda$ is $\Lambda$. Note that the set of subterms $sub(\Lambda) = \set{\Lambda}$ of $\Lambda$ is finite. In this paper, however, we restrict ourselves to finitely branching terms.
\end{ex}

In the following we develop the technical machinery to represent such (finitely-branching) regular terms. We will use two different kinds of variables: \textit{frozen} variables and regular variables. We fix a set $\mathcal{Y} = \set{y,y',y'',\ldots} \cup \set{y_1, y_2, \ldots}$ of frozen variables and a set $\mathcal{X} = \set{x,x',x'',\ldots} \cup \set{x_1,x_2,\ldots}$ of regular variables.
We let $\Sigma = \mathcal{C} \ \dot\cup\ \mathcal{F}$ denote a signature consisting of constructor symbols $\mathcal{C}$ and function symbols $\mathcal{F}$. We let $Var : \Sigma(\mathcal{X} \cup \mathcal{Y}) \rightarrow \mathcal{P}(\mathcal{X})$ denote the function that returns the set of all  variables occurring in a given term. Also, we let $FVar : \Sigma(\mathcal{X} \cup \mathcal{Y}) \rightarrow \mathcal{P}(\mathcal{Y})$ denote the function that returns the set of all frozen variables occurring in a given term. We denote the extension of $Var$ (resp. $FVar$) to sets of terms also with $Var$ (resp. $FVar$).

Consider a mapping $\rho$ from a set $S$ to the same set $S$, that maps all but a finite set of values $\set{s_1,\ldots,s_n} \subseteq S$ to themselves.
In the following, we will interchangeably use the following three views of such mappings.
Either we view $\rho$ as a finite set of pairs $\set{s_1\mapsto t_1,\ldots,s_n\mapsto t_n}$ that correspond to the mappings of elements that are not mapped to themselves,
or we view $\rho$ as a total function with domain $\set{s_1,\ldots,s_n}$,
or we view $\rho$ as a total function with domain $S$, that maps all values not in $\set{s_1,\ldots,s_n}$ to themselves.
Also, for any such mapping $\rho$ and any $S \subseteq dom(\rho)$, we let $\rho|_{S} = \setcomp{s \mapsto t \in \rho}{s \in S}$.

We represent finitely branching, regular terms with the following structures.

\begin{dfn}[Cyclic terms]
A cyclic term over a signature $\Sigma$ and a set of variables $\mathcal{X}$ is a term $t \in \Sigma(\mathcal{X} \cup \mathcal{Y})$, together with a complete mapping $\theta : FVar(t) \rightarrow \Sigma(\mathcal{X} \cup FVar(t)) \backslash FVar(t)$.
We say that a cyclic term is in \textit{base form} if $Var(rng(\theta)) \subseteq Var(t)$.

A cyclic term consisting of $t$ and $\theta$ is denoted with $(t,\theta)$. We denote the set of all cyclic terms over the signature $\Sigma$, the set of variables $\mathcal{X}$ and the set of frozen variables $\mathcal{Y}$ with $CT(\Sigma,\mathcal{X},\mathcal{Y})$, or simply $CT$ if $\Sigma$, $\mathcal{X}$ and $\mathcal{Y}$ are clear from the context.
\end{dfn}

Example~\ref{ex:infinitelist} illustrates how finitely-branching regular terms can be represented by cyclic terms.

\begin{ex}\label{ex:infinitelist}
The term $\Delta$ from Example~\ref{ex:regulartrees} with $\star$ being the list constructor operator $:$ -- in other words, the infinite list containing only 1's -- can be represented as the cyclic term $(y, \set{y \mapsto 1:y})$.
\end{ex}

%%%
Certain cyclic terms that are syntactically different can intuitively represent the same terms. Below, we define a notion of equivalence of cyclic terms (based on bisimulation on labelled graphs extracted from cyclic terms) to capture this intuition. In Example~\ref{ex:equivalence}, we give an example of two syntactically different, but equivalent cyclic terms.

\begin{dfn}[Cyclic term graph]
For a given cyclic term $(t,\theta)$ we define the cyclic term graph $(V,E,VL,EL)$ as a graph $(V,E)$ together with a vertex labelling $VL : V \rightarrow 2^{\mathcal{Y}}$ and an edge labelling $EL : E \rightarrow \mathbb{N}$ as follows:
\begin{itemize}
  \item $V := (sub(t) \cup sub(rng(\theta)))\ \backslash\ \mathcal{Y}$;
  \item $y \in VL(v)$ iff $y \in \mathcal{Y}, v \in V$ and $y \mapsto v \in \theta$; and
  \item $(v,v') \in E$ and $EL(v,v') = n$ iff
  \begin{itemize}
    \item either $v'$ is the $n$-th subterm of $v$,
    \item or there exists a $y \in \mathcal{Y}$ such that $y$ is the $n$-th subterm of $v$ and $v'$ is the unique vertex in $V$ such that $y \in VL(v')$.
  \end{itemize}
\end{itemize}
Here $sub$ denotes the function that returns the set of subterms occurring in a given term.
Note that for every $y \in FVar(t)$ there exists a unique node $v \in V$ such that $y \in VL(v)$, namely the unique term $v \in \Sigma(\mathcal{X} \cup \mathcal{Y})$ for which holds $y \mapsto v \in \theta$.
\end{dfn}

\begin{dfn}[Cyclic term graph bisimulation]
For two given cyclic term graphs $(V,E,VL,EL)$ for the cyclic term $(t,\theta)$, and $(V',E',VL',EL')$ for the cyclic term $(t',\theta')$, we say that two nodes $w \in V$ and $w' \in V'$ bisimulate if:
\begin{itemize}
  \item there exists a relation $Z \subseteq V \times V'$ that satisfies the following conditions:
  \begin{itemize}
    \item if $(v,v') \in Z$, then the terms $v$ and $v'$ must have the same outermost symbol from $\Sigma$, or be the same variable from $\mathcal{X}$;
    \item if $(v_1,v'_1) \in Z$, and also $(v_1,v_2) \in E$ and $EL(v_1,v_2) = n$, then there must exist a $v'_2 \in V'$ such that $(v_2,v'_2) \in Z$, $(v'_1,v'_2) \in E'$ and $EL'(v'_1,v'_2) = n$; and
    \item if $(v_1,v'_1) \in Z$, and also $(v'_1,v'_2) \in E'$ and $EL'(v'_1,v'_2) = n$, then there must exist a $v_2 \in V$ such that $(v_2,v'_2) \in Z$, $(v_1,v_2) \in E$ and $EL(v_1,v_2) = n$;
  \end{itemize}
  \item $(w,w') \in Z$.
\end{itemize}
Note that this notion of bisimulation does not depend in any way on the vertex labelings $V$ and $V'$.

For any $y \in FVar(t)$ and $y' \in FVar(t')$, we say that $y$ and $y'$ bisimulate in the cyclic term graphs if we have that $u$ and $u'$ bisimulate in the cyclic term graphs, where $u \in V$ is the unique node such that $y \in VL(u)$ and $u' \in V'$ is the unique node such that $y' \in VL'(u')$.
\end{dfn}

\begin{dfn}[Equivalence]
Two cyclic terms $(t,\theta)$ and $(t',\theta')$ are equivalent, denoted with $(t,\theta) \equiv (t',\theta')$, iff $t$ and $t'$ bisimulate in the cyclic term graphs for $(t,\theta)$ and $(t',\theta)$.
\end{dfn}

\begin{ex}\label{ex:equivalence}
Consider the two cyclic terms $(t,\theta) = (y,\set{y \mapsto 1:y})$ and $(t',\theta') = (y',\set{y' \mapsto 1:1:y'})$. The cyclic term graph $(V,E,VL,EL)$ for $(t,\theta)$ is given by:
\[ V = \set{1, 1:y}, \quad E = \set{(1:y,1), (1:y,1:y)} \]
\[ VL(1) = \emptyset, \quad VL(1:y) = \set{y} \]
\[ EL(1:y,1) = 1, \quad EL(1:y,1:y) = 2 \]
The cyclic term graph $(V',E',VL',EL')$ for $(t',\theta')$ is given by:
\[ V' = \set{1,1:y',1:1:y'}, \quad E' = \set{(1:y',1), (1:y',1:1:y'), (1:1:y',1), (1:1:y',1:y')},\]
\[ VL'(1) = \emptyset, \quad VL'(1:y') = \emptyset, \quad VL'(1:1:y') = \set{y'} \]
\[ EL'(1:y',1) = 1, \quad EL'(1:1:y',1) = 1 \]
\[ EL'(1:y',1:1:y') = 2, \quad EL'(1:1:y',1:y') = 2 \]
The relation $Z$, given below, is a bisimulation that witnesses that $1:y \in V$ and $1:1:y' \in V'$ bisimulate (and also witnesses that $y$ and $y'$ bisimulate).
\[ Z = \set{(1,1), (1:y,1:y'), (1:y,1:1:y')} \]
This bisimulation thus also witnesses that $(t,\theta) \equiv (t',\theta')$.

The two cyclic term graphs, together with the given bisimulation are drawn in Figure~\ref{fig:cyclictermgraphs}.
\end{ex}

\begin{figure}[ht]
  \begin{center}
    \begin{tikzpicture}[->,>=stealth',shorten >=1pt,auto,node distance=2.8cm,semithick]
      \tikzstyle{every state}=[]
      \node[state] (a1) {$1\!\!:\!\!y$};
      \node[state] (a2) [below of=a1] {$1$};
      \path (a1) edge node {$1$} (a2);
      \path (a1) edge [loop left] node {$2$} (a1);
      
      \node (a1label) [above of=a1, node distance=.7cm] {$y$};
  
      \node[state] (b1) [right of=a1] {$1\!\!:\!\!1\!\!:\!\!y'$};
      \node[state] (b2) [right of=b1] {$1\!\!:\!\!y'$};
      \node[state] (b3) [below of=b1] {$1$};
      \path (b1) edge node {$1$} (b3);
      \path (b2) edge node {$1$} (b3);
      \path (b1) edge [bend left=10] node {$2$} (b2);
      \path (b2) edge [bend left=10] node {$2$} (b1);
      
      \node (b1label) [above of=b1, node distance=.9cm] {$y'$};
      
      \path [draw,-,dashed] (a2) -- (b3);
      \path [draw,-,dashed] (a1) -- (b1);
      \path [draw,-,dashed] (a1) edge [bend left=50] (b2);
    \end{tikzpicture}
  \end{center}
  \caption{Example of cyclic term graphs for $(y,\set{y \mapsto 1:y})$ and $(y',\set{y' \mapsto 1:1:y'})$, and a bisimulation. The bisimulation $Z$ is drawn with a dashed line.}
  \label{fig:cyclictermgraphs}
\end{figure}
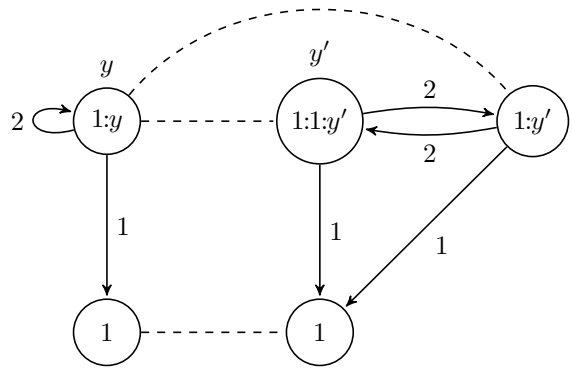
%%%

The notion of equivalence of cyclic terms allows us to state the following property. This property will allow us to assume, without loss of generality, that cyclic terms are in base form.

\begin{prop}\label{prop:baseformnormalization}
For every cyclic term there is an equivalent cyclic term in base form.
\end{prop}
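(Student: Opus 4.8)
The plan is to reach the base form by unfolding the root term exactly once. For a term $s \in \Sigma(\mathcal{X} \cup \mathcal{Y})$ write $\hat{s}$ for the term obtained by simultaneously replacing every occurrence of each frozen variable $y \in FVar(t)$ by its body $\theta(y)$; note that the newly introduced frozen variables inside the substituted bodies are \emph{not} unfolded further. Given $(t,\theta)$, I would take the new cyclic term to be $(\hat{t},\theta')$ with $\theta' = \theta|_{FVar(\hat{t})}$. First I would verify that this is a well-formed cyclic term: since the bodies of $\theta$ lie in $\Sigma(\mathcal{X}\cup FVar(t))$ we have $FVar(\hat{t}) = FVar(rng(\theta)) \subseteq FVar(t) = dom(\theta)$, so $\theta'$ is defined on exactly $FVar(\hat{t})$, and each $\theta'(y)=\theta(y)$ mentions only frozen variables in $FVar(rng(\theta)) = FVar(\hat{t})$, as the definition demands.

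The key observation, and the reason one unfolding is enough, is that $dom(\theta) = FVar(t)$ by definition, so every frozen variable carrying a body already occurs in $t$. Consequently $\hat{t}$ contains each body $\theta(y)$ as a subterm, whence $Var(rng(\theta)) \subseteq Var(\hat{t})$; since $Var(rng(\theta')) \subseteq Var(rng(\theta))$ this yields $Var(rng(\theta')) \subseteq Var(\hat{t})$, i.e.\ $(\hat{t},\theta')$ is in base form.

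It remains to prove $(t,\theta) \equiv (\hat{t},\theta')$, which is the bulk of the work. Writing $G$ and $\hat{G}$ for the two cyclic term graphs, I would first record that their vertex sets decompose as $V = A \cup B$ and $\hat{V} = \hat{A} \cup B$, where $A := sub(t)\setminus\mathcal{Y}$, $B := sub(rng(\theta))\setminus\mathcal{Y}$ and $\hat{A} := \setcomp{\hat{s}}{s\in A}$; that is, the two graphs share the entire ``body part'' $B$ and differ only in that the ``root part'' $A$ is replaced by its unfolded image $\hat{A}$ (all bodies, including those dropped from $\theta'$, reappear inside $\hat{t}$). I would then propose the relation
\[ Z = \setcomp{(s,\hat{s})}{s\in A} \cup \setcomp{(s,s)}{s\in B} \]
and show it is a bisimulation containing the pair of roots.

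Checking the three bisimulation clauses is routine, but needs care with the back-edges that frozen variables induce, which is where I expect the only real friction. For a pair $(s,\hat{s})$ with $s = f(c_1,\dots,c_m)\in A$, the node $\hat{s}$ equals $f(\hat{c_1},\dots,\hat{c_m})$, so its edge with label $n$ leads to $\hat{c_n}$: if $c_n\notin\mathcal{Y}$ this matches the edge of $s$ to $c_n$, with $(c_n,\hat{c_n})\in Z$; if $c_n=y$ is frozen, then in $G$ the edge of $s$ leads to the node $\theta(y)$ while $\hat{c_n}=\theta(y)$, so both sides reach the node $\theta(y)\in B$, related by $(\theta(y),\theta(y))\in Z$. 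The pairs $(s,s)$ with $s\in B$ are symmetric, using that a frozen variable $y$ occurring in a body lies in $dom(\theta')$, so the back-edge in $\hat{G}$ again lands on $\theta'(y)=\theta(y)$. The outermost-symbol clause is immediate, since unfolding preserves the outermost $\Sigma$-symbol and fixes regular variables. For the root pair I would split on whether $t\in\mathcal{Y}$: if $t\notin\mathcal{Y}$ then $(t,\hat{t})\in Z$ directly; if $t=y$, then by the frozen-variable convention both roots reduce to the node $\theta(y)=\hat{t}$, which $Z$ relates to itself. The main obstacle is therefore bookkeeping rather than conceptual: one must consistently track how a frozen-variable leaf of $G$ turns, after unfolding, into a genuine $\Sigma$-rooted subterm node of $\hat{G}$, while preserving all edge labels.
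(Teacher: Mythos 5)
Your proof is correct and uses exactly the paper's construction, namely the one-step unfolding $(\theta(t),\theta|_{FVar(\theta(t))})$, with the same base-form argument; the only difference is that your equivalence check is carried out more explicitly. In fact your decomposition of the vertex sets as $A \cup B$ versus $\hat{A} \cup B$, with the bisimulation $\setcomp{(s,\hat{s})}{s \in A} \cup \setcomp{(s,s)}{s \in B}$, is more accurate than the paper's assertion that $V = V'$, which holds only up to the identification $s \leftrightarrow \hat{s}$ of root-part subterms with their unfoldings.
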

\begin{proof}
Let $(t,\theta)$ be a cyclic term. We show that $(t',\theta') = (\theta(t),\theta|_{FVar(\theta(t))})$ is an equivalent cyclic term in base form.
Clearly, $(t',\theta')$ is a cyclic term.

We show that $(t',\theta')$ is equivalent to $(t,\theta)$. Consider the cyclic term graphs $(V,E,VL,EL)$ for $(t,\theta)$ and $(V',E',VL',EL')$ for $(t',\theta')$. Since $(t,\theta)$ and $(t',\theta')$ contain the same subterms (that are not frozen variables), we know $V = V'$.
Also, the construction of edges and edge labels in the cyclic term graph for a cyclic term $(t,\theta)$ is invariant under application of $\theta$ to $t$.
Thus, both cyclic term graphs are isomorphic (when not taking into account vertex labels). Since the definition of bisimulation does not use vertex labels, we get that $id$ witnesses that the term graphs bisimulate. Thus $(t,\theta)$ and $(t',\theta')$ are equivalent.

Now, we show that $(t',\theta')$ is in base form. Take an arbitrary $x \in Var(rng(\theta'))$. Since $\theta' \subseteq \theta$, we know $x$ is in $Var(rng(\theta))$ as well. Thus there is a $y \mapsto t_{y} \in \theta$ such that $x \in Var(t_{y})$. By definition of cyclic terms, we know that $y \in FVar(t)$. Thus $x \in Var(\theta(t)) = Var(t')$.
\end{proof}

%%%

When dealing with terms containing variables, a notion of substitutions is required. We define substitutions on cyclic terms and their effect as follows.

\begin{dfn}[Substitutions]
A substitution for cyclic terms is a mapping $\sigma: \mathcal{X} \rightarrow \Sigma(\mathcal{X} \cup \mathcal{Y})$ that maps all but finitely many variables to themselves, together with a total mapping $\omega : FVar(rng(\sigma)) \rightarrow \Sigma(\mathcal{X} \cup FVar(rng(\sigma))) \backslash FVar(rng(\sigma))$. Substitutions are denoted with $(\sigma,\omega)$. We denote the unique homomorphic extension $\hat\sigma$ of $\sigma$ on $\Sigma(\mathcal{X} \cup \mathcal{Y})$ also with $\sigma$ if no confusion arises. Similarly for the unique homomorphic extension $\hat\omega$ of $\omega$ on $\Sigma(\mathcal{X} \cup \mathcal{Y})$.
\end{dfn}

\begin{dfn}[Safe substitutions]
A substitution $(\sigma,\omega)$ is safe for a cyclic term $(t,\theta)$ if $dom(\omega) \cap FVar(t) = \emptyset$.
In other words, a substitution is safe for a cyclic term, if it doesn't redefine (the mapping of) frozen variables that were already defined.
\end{dfn}

\begin{dfn}[Proper substitutions]
A substitution $(\sigma,\omega)$ is proper for a cyclic term $(t,\theta)$ if $FVar(rng(\omega|_{FVar(\sigma(t))})) \subseteq FVar(\sigma(t))$.
In other words, a substitution is proper for a cyclic term $(t,\theta)$, if the the mappings of all frozen variables introduced by instantiation of a variable in $t$ contain no frozen variables that are not introduced by instantiation of a variable in$t$
\end{dfn}

\begin{dfn}[Effect of safe, proper substitutions]\label{dfn:effect}
The result $(\sigma,\omega)(t,\theta)$ of a substitution $(\sigma,\omega)$, safe and proper for a cyclic term $(t,\theta)$, to this cyclic term $(t,\theta)$ is defined as $(\sigma(t),(\sigma(\theta) \cup \omega)|_{FVar(\sigma(t))})$.
Here $\sigma(\set{y_1 \mapsto t_1,\ldots,y_n \mapsto t_n}) = \set{y_1 \mapsto \sigma(t_1), \ldots y_n \mapsto \sigma(t_n)}$.
\end{dfn}

\begin{prop}
The result $(t',\theta')$ of a substitution $(\sigma,\omega)$, safe and proper for a cyclic term $(t,\theta)$, to this cyclic term $(t,\theta)$ is a cyclic term.
\end{prop}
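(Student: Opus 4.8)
The plan is to unfold the definition of the effect and check directly that the pair $(t',\theta')=(\sigma(t),(\sigma(\theta)\cup\omega)|_{FVar(\sigma(t))})$ satisfies the two clauses of the definition of a cyclic term: that $t'$ is a term of $\Sigma(\mathcal{X}\cup\mathcal{Y})$, and that $\theta'$ is a complete mapping $FVar(t')\rightarrow\Sigma(\mathcal{X}\cup FVar(t'))\backslash FVar(t')$. The first clause is immediate, since $\sigma$ is the homomorphic extension of a map into $\Sigma(\mathcal{X}\cup\mathcal{Y})$ that fixes every frozen variable, so $t'=\sigma(t)\in\Sigma(\mathcal{X}\cup\mathcal{Y})$. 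All the real work is in the mapping $\theta'$, and I would carry it out under the standing convention (justified by Proposition~\ref{prop:baseformnormalization}) that $(t,\theta)$ is in base form, i.e.\ $Var(rng(\theta))\subseteq Var(t)$; without this the claim fails, since a regular variable occurring in $rng(\theta)$ but not in $t$ could be sent by $\sigma$ to a term carrying a frozen variable that is not in $FVar(\sigma(t))$.

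First I would fix the frozen-variable bookkeeping. Because $\sigma$ rewrites only the regular variables of $t$ and leaves frozen variables untouched, $FVar(t')=FVar(\sigma(t))=FVar(t)\cup\bigcup_{x\in Var(t)}FVar(\sigma(x))$, and the second term is contained in $FVar(rng(\sigma))=dom(\omega)$. Next I would pin down the domain of $\theta'$. Safety gives $dom(\omega)\cap FVar(t)=\emptyset$, so $\sigma(\theta)$ (whose domain is $FVar(t)$) and $\omega$ (whose domain is $FVar(rng(\sigma))$) have disjoint domains, and hence $\sigma(\theta)\cup\omega$ is a genuine function on $FVar(t)\cup FVar(rng(\sigma))$. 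By the identity above $FVar(\sigma(t))\subseteq FVar(t)\cup FVar(rng(\sigma))$, so restricting to $FVar(\sigma(t))$ leaves the map total there, giving $dom(\theta')=FVar(\sigma(t))=FVar(t')$ exactly.

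The core step is the codomain condition, which I would establish by splitting on a fixed $y\in dom(\theta')=FVar(\sigma(t))$. If $y\in FVar(t)$, then $y\notin dom(\omega)$ by safety, so $\theta'(y)=\sigma(\theta(y))$; every frozen variable of this term is either already present in $\theta(y)$, hence in $FVar(t)\subseteq FVar(\sigma(t))$, or is introduced by $\sigma$ acting on some $x\in Var(\theta(y))$, and base form gives $Var(\theta(y))\subseteq Var(rng(\theta))\subseteq Var(t)$, so such a variable lies in $\bigcup_{x\in Var(t)}FVar(\sigma(x))\subseteq FVar(\sigma(t))$. If instead $y\in FVar(rng(\sigma))$, then $\theta'(y)=\omega(y)$ with $y\in dom(\omega|_{FVar(\sigma(t))})$, so properness yields $FVar(\omega(y))\subseteq FVar(rng(\omega|_{FVar(\sigma(t))}))\subseteq FVar(\sigma(t))$. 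In both cases $\theta'(y)\in\Sigma(\mathcal{X}\cup FVar(\sigma(t)))=\Sigma(\mathcal{X}\cup FVar(t'))$.

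The step I expect to be the main obstacle is the remaining half of the codomain condition: that $\theta'(y)$ is never a bare frozen variable of $FVar(t')$. For the $\omega$-branch this is automatic, since the codomain of $\omega$ already excludes bare frozen variables. For the $\sigma(\theta)$-branch it is automatic whenever $\theta(y)$ is headed by a constructor or function symbol, which $\sigma$ preserves; the only delicate subcase is $\theta(y)$ being a bare regular variable $x$ with $\sigma(x)$ a bare frozen variable $y''$, which would make $y\mapsto y''$ a forbidden frozen-to-frozen entry. The most economical way to rule this out is to adopt the mild well-formedness convention that a substitution never maps a regular variable to a bare frozen variable; alternatively one observes that such an entry can be inlined to $y\mapsto\omega(y'')$ (using $y''\in FVar(rng(\sigma))$ together with properness), yielding a cyclic term equivalent to the effect in the sense of the earlier notion of equivalence. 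This collapse is precisely the point where safety, base form, and properness do not by themselves suffice, so it is where I would concentrate the care.
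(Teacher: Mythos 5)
Your proof follows essentially the same route as the paper's: assume base form, use safety to see that $\sigma(\theta)\cup\omega$ is a functional mapping that is total on $FVar(\sigma(t))$, and split $FVar(rng(\theta'))$ into the $\sigma(\theta)$-part (where base form forces $Var(rng(\theta))\subseteq Var(t)$ and hence membership in $FVar(\sigma(t))$) and the $\omega$-part (handled by properness). The one place you go beyond the paper is your observation that the codomain condition also requires $\theta'(y)$ never to be a bare frozen variable of $FVar(t')$, which can fail when $\theta(y)$ is a bare regular variable that $\sigma$ sends to a bare frozen variable; the paper's proof simply asserts the full codomain condition in its closing sentence without treating this subcase, so your flag is warranted and either of your proposed repairs (a well-formedness convention forbidding $\sigma(x)\in\mathcal{Y}$, or inlining such an entry through $\omega$) would close the gap.
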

\begin{proof}
We show that the result $(t',\theta')$ is a cyclic term.
Assume without loss of generality that $(t,\theta)$ is in base form.
Since $(\sigma,\omega)$ is safe for $(t,\theta)$, we know that $dom(\sigma(\theta)) \cap dom(\omega) = \emptyset$. Therefore $\sigma(\theta) \cup \omega$ is a (functional) mapping, and thus so is $t'$.

Clearly, $dom(\theta') \subseteq FVar(t')$. We show that $FVar(t') \subseteq dom(\theta')$. Take an arbitrary $y \in FVar(t')$. Then either (i) $y \in FVar(t)$ or (ii) for some $x \in Var(t)$, we have $x \mapsto t_{y} \in \sigma$ for some $t_{y}$ and $y \in FVar(t_{y})$. In case (i), $y \in FVar(\sigma(t))$ and $y \in dom(\theta)$. So also, $y \in dom(\sigma(\theta))$. Therefore, $y \in dom(\theta')$. In case (ii), $y \in FVar(\sigma(t))$ and $y \in dom(\omega)$, thus $y \in dom(\theta')$.

Finally, we show that $FVar(rng(\theta')) \subseteq FVar(t')$. Take an arbitrary $y \in FVar(rng(\theta'))$. Then either (i) $y \in FVar(rng(\sigma(\theta)))$ or (ii) $y \in FVar(rng(\omega))$. In case (i), either (i.a) $y \in FVar(rng(\theta))$ or (i.b) for some $x \in Var(rng(\theta))$, we have $x \mapsto t_{y} \in \sigma$ and $y \in FVar(t_{y})$.
In case (i.a), $y \in FVar(t)$, so $y \in FVar(t')$.
In case (i.b), since $(t,\theta)$ is in base form, $x \in Var(t)$ and thus $y \in FVar(t')$.
In case (ii), we know that $y \in FVar(rng(\omega|_{FVar(\sigma(t))}))$. Then, since $(\sigma,\omega)$ is proper for $(t,\theta)$, we know that $y \in FVar(t')$.

We showed that $(t',\theta')$ has the property that $\theta'$ is a total mapping from $FVar(t')$ to $\Sigma(\mathcal{X} \cup FVar(t')) \backslash FVar(t')$. Thus $(t',\theta')$ is a cyclic term.
\end{proof}

These notions of equivalence and substitutions, defined above, allow us to define a notion of unification.

\begin{dfn}[Unification]
A substitution $(\sigma,\omega)$ unifies two cyclic terms $(t,\theta)$ and $(t',\theta')$ if 
\begin{itemize}
  \item it is safe and proper for both $(t,\theta)$ and $(t',\theta')$, and
  \item $(\sigma,\omega)(t,\theta) \equiv (\sigma,\omega)(t',\theta')$.
\end{itemize}
\end{dfn}

In order to avoid unwanted side-effects of operations on cyclic terms (or substitutions) that happen to share (frozen) variables, we define a method to standardize the frozen variables in two cyclic terms apart. We introduce frozen variable renamings, and use them to standardize the frozen variables in cyclic terms and substitutions apart.

\begin{dfn}[Frozen variable renaming]
A frozen variable renaming is a function $f: \mathcal{Y} \rightarrow \mathcal{Y}$ that maps finitely many $y \in \mathcal{Y}$ to different and pairwise disjoint $y' \in \mathcal{Y}$, and all other frozen variables to themselves. We denote the unique homomorphic extension $\hat f$ of a frozen variable renaming $f$ on $\Sigma(\mathcal{X} \cup \mathcal{Y})$ also with $f$, if no confusion arises.
\end{dfn}

\begin{dfn}[Safe frozen variable renaming]
A frozen variable renaming $f = \set{y_1 \mapsto y'_1, \ldots, y_n \mapsto y'_n}$ is safe for a cyclic term $(t,\theta)$ if for all $1 \leq i \leq n$ we have that $y'_i \not\in FVar(t)$.

A frozen variable renaming $f = \set{y_1 \mapsto y'_1, \ldots, y_n \mapsto y'_n}$ is safe for a substitution $(\sigma,\omega)$ if for all $1 \leq i \leq n$ we have that $y'_i \not\in dom(\omega)$.
\end{dfn}

\begin{dfn}[Renaming frozen variables in cyclic terms]
The result $f(t,\theta)$ of a frozen variable renaming $f$, safe for a cyclic term $(t,\theta)$, to this cyclic term $(t,\theta)$ is the cyclic term $(f(t),f(\theta))$, where $f(\theta) = \setcomp{f(y) \mapsto f(t_{y})}{y \mapsto t_{y} \in \theta}$. The fact that $f$ is safe for $(t,\theta)$, together with the fact that the range of $f$ is pairwise disjoint, assures that $f(t,\theta)$ is a cyclic term, and thus that applying a safe frozen variable renaming to a cyclic term results in a cyclic term.
\end{dfn}

\begin{obs}
For any cyclic term $(t,\theta)$ and any frozen variable renaming $f$, safe for $(t,\theta)$, we have that $(t,\theta) \equiv f(t,\theta)$.
\end{obs}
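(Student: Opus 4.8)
The plan is to show that the cyclic term graphs of $(t,\theta)$ and of $f(t,\theta) = (f(t),f(\theta))$ are isomorphic as edge-labelled graphs when vertex labels are ignored, with the isomorphism given by applying $f$ to each node, and then to exploit the fact --- already used in the proof of Proposition~\ref{prop:baseformnormalization} --- that bisimulation does not depend on the vertex labelling. Since $f$ is safe for $(t,\theta)$, the object $f(t,\theta)$ is indeed a cyclic term, so its graph $G' = (V',E',VL',EL')$ is defined; write $G = (V,E,VL,EL)$ for the graph of $(t,\theta)$, and define $\phi : V \to V'$ by $\phi(v) = f(v)$.

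First I would check that $\phi$ is a well-defined bijection. Because $f$ is a homomorphic extension it commutes with $sub$ and with taking ranges, so $sub(f(t)) = f(sub(t))$ and $rng(f(\theta)) = f(rng(\theta))$; moreover $f$ sends frozen variables to frozen variables and every other term to a non-frozen term, so applying $f$ commutes with deleting $\mathcal{Y}$, giving $V' = \phi(V)$ and hence surjectivity. For injectivity it suffices that $f$ is injective on the frozen variables occurring in $(t,\theta)$, i.e.\ on $FVar(t) = dom(\theta)$: the renamed variables have pairwise distinct images, and safety guarantees that each image $y'_i \notin FVar(t)$, so no renamed image collides with an untouched frozen variable of the term; the homomorphic extension of such an injective renaming is injective on $V$.

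Next I would verify that $\phi$ is an isomorphism of the edge-labelled, symbol-decorated graphs. As $f$ only renames frozen variables, it fixes every constructor and function symbol and every regular variable, so $v$ and $\phi(v)$ always carry the same outermost symbol from $\Sigma$ or are the same variable from $\mathcal{X}$ (note that frozen variables never occur as nodes). For edges, the $n$-th direct subterm of $f(v)$ is exactly $f$ applied to the $n$-th direct subterm $s$ of $v$. If $s \notin \mathcal{Y}$ this yields the edge $(\phi(v),\phi(s))$ with label $n$; if $s = y \in \mathcal{Y}$, the edge of $G$ runs to the node $\theta(y)$, while in $G'$ the $n$-th subterm of $f(v)$ is $f(y)$, whose defining node is $f(\theta)(f(y)) = f(\theta(y)) = \phi(\theta(y))$, so the corresponding edge of $G'$ runs to $\phi(\theta(y))$ with the same label $n$. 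Bijectivity of $\phi$ makes this correspondence of edges hold in both directions.

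Finally, by the isomorphism and the label-independence of bisimulation, $Z = \setcomp{(v,\phi(v))}{v \in V}$ is a bisimulation between $G$ and $G'$, so every node bisimulates with its $\phi$-image. If $t \notin \mathcal{Y}$ then $t \in V$ and $(t,f(t)) \in Z$ shows that $t$ and $f(t)$ bisimulate; if $t = y \in \mathcal{Y}$, I would instead apply the frozen-variable clause of the bisimulation definition, noting that $y$ is labelled at $\theta(y)$ and $f(y)$ at $\phi(\theta(y))$, which bisimulate via $Z$. In either case $(t,\theta) \equiv f(t,\theta)$. I expect the main obstacle to be the bookkeeping in the edge-preservation step, specifically the case in which an immediate subterm is a frozen variable and must be routed through $\theta$ (respectively $f(\theta)$) to its defining node; pinning down the identity $f(\theta)(f(y)) = f(\theta(y))$ together with the boundary case where $t$ itself is a frozen variable is the delicate part, whereas the remainder is routine.
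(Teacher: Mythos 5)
Your proposal is correct and follows essentially the same route as the paper, whose proof is only the one-line sketch that ``there exists a trivial bisimulation between the cyclic term graphs of $(t,\theta)$ and $f(t,\theta)$'': the bisimulation you construct, namely the graph of the map $v \mapsto f(v)$, is exactly that trivial bisimulation, worked out in full detail. Your additional care with injectivity of $f$ on $FVar(t)$, the identity $f(\theta)(f(y)) = f(\theta(y))$, and the boundary case $t \in \mathcal{Y}$ fills in precisely what the paper leaves implicit.
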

\begin{proof}[Proof (sketch).]
There exists a trivial bisimulation between the cyclic term graphs of $(t,\theta)$ and $f(t,\theta)$.
\end{proof}

\begin{dfn}[Renaming frozen variables in substitutions]
The result $f(\sigma,\omega)$ of a frozen variable renaming $f$ to the substitution $(\sigma,\omega)$ is the substitution $(f(\sigma),f(\omega))$.
The fact that $f$ is safe for $(\sigma,\omega)$, together with the fact that the range of $f$ is pairwise disjoint, ensures that the result is a substitution.
\end{dfn}

\begin{prop}
Let $(t,\theta)$ be a cyclic term, and let $(\sigma,\omega)$ be a substitution, safe and proper for $(t,\theta)$. Also, let $f$ be a frozen variable renaming safe for $(\sigma,\omega)$. If $f(\sigma,\omega)$ is safe and proper for $(t,\theta)$, then we have that $(f(\sigma,\omega))(t,\theta) \equiv (\sigma,\omega)(t,\theta)$.
\end{prop}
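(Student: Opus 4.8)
The plan is to reduce the statement to the earlier Observation that applying a safe frozen variable renaming to a cyclic term yields an equivalent cyclic term. Concretely, I would exhibit a \emph{single} frozen variable renaming $g$, safe for the cyclic term $(\sigma,\omega)(t,\theta)$, such that $g$ applied to $(\sigma,\omega)(t,\theta)$ equals $(f(\sigma,\omega))(t,\theta)$ on the nose, not merely up to equivalence. Once this is in place, the Observation gives $(\sigma,\omega)(t,\theta) \equiv g((\sigma,\omega)(t,\theta)) = (f(\sigma,\omega))(t,\theta)$, and symmetry of $\equiv$ finishes the argument. The reason $g$ cannot simply be $f$ is that, as applied inside $f(\sigma,\omega)$, the renaming $f$ touches only the frozen variables living in the ranges of $\sigma$ and $\omega$, whereas any renaming of the already-substituted term $(\sigma,\omega)(t,\theta)$ must leave the frozen variables contributed by $t$ itself untouched.

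I would therefore set $g$ to agree with $f$ on $I := FVar(rng(\sigma|_{Var(t)}))$, the frozen variables that $\sigma$ introduces into $t$, and to be the identity on every other frozen variable, in particular on $FVar(t)$. Since $(\sigma,\omega)$ is safe for $(t,\theta)$ we have $FVar(rng(\sigma)) \cap FVar(t) = \emptyset$, so that $FVar(\sigma(t)) = FVar(t)\ \dot\cup\ I$ and $I \subseteq dom(\omega)$. The first thing to check is that $g$ is a legal frozen variable renaming, safe for $(\sigma,\omega)(t,\theta)$: its nontrivial targets lie in $f(I)$, which avoids $FVar(t)$ because $f(\sigma,\omega)$ is safe for $(t,\theta)$ (so $dom(f(\omega)) \cap FVar(t) = \emptyset$ and $f(I) \subseteq dom(f(\omega))$), and avoids $I$ because $f$ is safe for $(\sigma,\omega)$ (so the targets of $f$ lie outside $dom(\omega) \supseteq I$). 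Next I would verify the claimed equality componentwise. For the term component, both $g(\sigma(t))$ and $f(\sigma)(t)$ rename exactly the $I$-occurrences via $f$ and fix the $FVar(t)$-occurrences, so they coincide. For the mapping component I would use safety to split $\theta_1 = (\sigma(\theta)\cup\omega)|_{FVar(\sigma(t))}$ as the disjoint union $\sigma(\theta)\ \cup\ \omega|_I$ (the $\sigma(\theta)$-part has keys in $FVar(t)$ and survives entirely, while the $\omega$-part survives the restriction only on $I$), and expand $\theta_2$ analogously.

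The hard part will be the mapping component, specifically the $\omega$-part. Applying $g$ to $\omega|_I$ produces values $g(\omega(y))$, whereas $(f(\sigma,\omega))(t,\theta)$ carries the values $f(\omega(y))$; these agree only if $g$ and $f$ agree on every frozen variable occurring in $\omega(y)$. A priori $\omega(y)$ may mention frozen variables in $dom(\omega)\setminus I$, on which $g$ is the identity but $f$ need not be. This is exactly where properness of $(\sigma,\omega)$ for $(t,\theta)$ is needed: it forces $FVar(rng(\omega|_{FVar(\sigma(t))})) \subseteq FVar(\sigma(t))$, and since $\omega|_{FVar(\sigma(t))} = \omega|_I$ and $dom(\omega)$ is disjoint from $FVar(t)$, this yields $FVar(\omega(y)) \subseteq I$ for each $y \in I$, so $g$ and $f$ really do agree there. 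The $\sigma(\theta)$-part is handled in the same spirit: the frozen variables of each $\sigma(t_y)$ all lie in $FVar(t) \cup I$ because $(\sigma,\omega)(t,\theta)$ is already known to be a cyclic term by the preceding proposition, whence $g(\sigma(t_y)) = f(\sigma)(t_y)$. With both components matching we obtain $g((\sigma,\omega)(t,\theta)) = (f(\sigma,\omega))(t,\theta)$, and the Observation together with symmetry of $\equiv$ completes the plan. A pleasant side effect of routing everything through the well-formedness of the result is that the containment $FVar(\sigma(t_y)) \subseteq FVar(t)\cup I$ comes for free, so there is no need to first assume $(t,\theta)$ is in base form.
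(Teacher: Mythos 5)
Your proof is correct, but it takes a genuinely different route from the paper. The paper's own argument (a two-line sketch) stays at the level of cyclic term graphs: since frozen variable names enter the graph construction only through the vertex labelling, which bisimulation ignores, the graphs of $(f(\sigma,\omega))(t,\theta)$ and $(\sigma,\omega)(t,\theta)$ coincide up to renaming and the (identity-like) isomorphism is itself a bisimulation; the safety and properness hypotheses are never invoked beyond guaranteeing that both sides are defined. You instead work purely syntactically: you manufacture a single frozen variable renaming $g$ (agreeing with $f$ on $I = FVar(rng(\sigma|_{Var(t)}))$ and the identity elsewhere), prove $g((\sigma,\omega)(t,\theta)) = (f(\sigma,\omega))(t,\theta)$ on the nose, and then discharge the equivalence via the earlier Observation that safe frozen variable renamings preserve $\equiv$. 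What your route buys is precision about the hypotheses: you pinpoint that safety of $f$ for $(\sigma,\omega)$ and safety of $f(\sigma,\omega)$ for $(t,\theta)$ are exactly what make $g$ safe for the substituted term, and that properness (together with the codomain condition $FVar(rng(\omega)) \subseteq dom(\omega)$ and the well-formedness of $(\sigma,\omega)(t,\theta)$ from the preceding proposition) is what forces $f$ and $g$ to agree on every frozen variable actually occurring in the surviving parts of $\omega$ and $\sigma(\theta)$ --- details the paper's sketch silently absorbs into ``invariant under uniform substitution of frozen variables.'' The cost is length and a dependence on the preceding well-formedness proposition; the paper's graph argument is shorter and hypothesis-free but correspondingly looser (the graphs are isomorphic rather than literally identical, since renamed frozen variables change the vertex terms themselves).
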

\begin{proof}[Proof (sketch).]
Since $f$ has an effect only on frozen variables, and since the construction of cyclic term graphs is invariant under (uniform) substitution of frozen variables, the cyclic term graphs for $(f(\sigma,\omega))(t,\theta)$ and $(\sigma,\omega)(t,\theta)$ are identical. Thus $id$ is the bisimulation that witnesses that $(f(\sigma,\omega))(t,\theta) \equiv (\sigma,\omega)(t,\theta)$.
\end{proof}

\begin{dfn}[Standardization apart]
Two cyclic terms $(t,\theta)$ and $(t',\theta')$ are said to have their frozen variables standardized apart if $FVar(t) \cap FVar(t') = \emptyset$.
Two substitutions $(\sigma,\omega)$ and $(\sigma',\omega')$ are said to have their frozen variables standardized apart if $dom(\omega) \cap dom(\omega') = \emptyset$.
\end{dfn}

Cyclic terms and substitutions for cyclic terms can be standardized apart by means of safe variable renaming. In the following, we will assume that cyclic terms and substitutions have their frozen variables standardized apart. With this method of standardization apart in place, we can return to substitutions and define composition of substitutions (that have their frozen variables standardized apart).

\begin{dfn}[Composition of substitutions]
Let $(\sigma,\omega)$ and $(\sigma',\omega')$ be two substitutions, with frozen variables standardized apart. We define the composition of these two substitutions to be $(\sigma,\omega)\circ(\sigma',\omega') = (\sigma \circ \sigma', \omega \cup \omega')$.
The fact that the two substitutions have their frozen variables standardized apart ensures that the result is a substitution.
\end{dfn}

\begin{prop}
Let $(\sigma,\omega)$ and $(\sigma',\omega')$ be two substitutions, that have their frozen variables standardized apart. Let $(\sigma',\omega')$ be safe and proper for a given cyclic term $(t,\theta)$, and let $(\sigma,\omega)$ be safe and proper for $(\sigma',\omega')(t,\theta)$. Then $(\sigma,\omega) \circ (\sigma',\omega')$ is safe and proper for $(t,\theta)$.
\end{prop}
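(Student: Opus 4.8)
The plan is to unfold the definition of composition, writing $(\sigma,\omega)\circ(\sigma',\omega') = (\sigma\circ\sigma',\omega\cup\omega')$, and to verify the safety and properness conditions for $(t,\theta)$ separately. Throughout I will rely on two structural facts about how substitutions interact with frozen variables. First, a substitution acts only on regular variables, so frozen variables occurring in a term are never erased by it; in particular $FVar(t) \subseteq FVar(\sigma'(t))$, and more generally, for any term $s$, $FVar(\sigma(s)) = FVar(s) \cup \bigcup_{x \in Var(s)} FVar(\sigma(x))$. Second, by the definition of a substitution, $dom(\omega) = FVar(rng(\sigma))$ and $dom(\omega') = FVar(rng(\sigma'))$, since the codomain of $\omega$ excludes bare frozen variables and hence every element of its domain is genuinely moved.

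For safety I must show $dom(\omega\cup\omega') \cap FVar(t) = \emptyset$, i.e. both $dom(\omega')\cap FVar(t) = \emptyset$ and $dom(\omega)\cap FVar(t) = \emptyset$. The first is exactly the safety of $(\sigma',\omega')$ for $(t,\theta)$. For the second, safety of $(\sigma,\omega)$ for $(\sigma',\omega')(t,\theta)$ gives $dom(\omega)\cap FVar(\sigma'(t)) = \emptyset$, since the first component of $(\sigma',\omega')(t,\theta)$ is $\sigma'(t)$; combining this with $FVar(t) \subseteq FVar(\sigma'(t))$ yields the claim. This part is routine.

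For properness I must show $FVar(rng((\omega\cup\omega')|_{FVar(u)})) \subseteq FVar(u)$, where $u = (\sigma\circ\sigma')(t) = \sigma(\sigma'(t))$; write $s = \sigma'(t)$. I would fix an arbitrary $z$ in the left-hand set, obtain a frozen variable $y \in FVar(u)\cap dom(\omega\cup\omega')$ with $z \in FVar((\omega\cup\omega')(y))$, and split on whether $y\in dom(\omega)$ or $y\in dom(\omega')$ (the two domains being disjoint by standardization apart). If $y \in dom(\omega)$, then since $y \in FVar(u) = FVar(\sigma(s))$, properness of $(\sigma,\omega)$ for $(\sigma',\omega')(t,\theta)$ gives $z \in FVar(\sigma(s)) = FVar(u)$ directly. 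If $y \in dom(\omega')$, I decompose $FVar(u) = FVar(s) \cup \bigcup_{x\in Var(s)} FVar(\sigma(x))$: when $y \in FVar(s)$, properness of $(\sigma',\omega')$ for $(t,\theta)$ gives $z \in FVar(s) \subseteq FVar(u)$.

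The remaining sub-case is the one I expect to be the crux: $y\in dom(\omega')$ but $y \notin FVar(s)$, so that $y$ is a frozen variable newly introduced into $u$ by $\sigma$, i.e. $y \in FVar(\sigma(x))$ for some $x \in Var(s)$. The key observation is that this forces $y \in FVar(rng(\sigma)) = dom(\omega)$, which together with $y \in dom(\omega')$ contradicts the hypothesis that the two substitutions have their frozen variables standardized apart, $dom(\omega)\cap dom(\omega') = \emptyset$. Hence this sub-case cannot occur, and in every admissible case $z \in FVar(u)$, establishing properness. So the main obstacle is not a hard estimate but correctly recognizing that standardization apart is precisely what rules out the problematic interaction between the frozen variables freshly created by $\sigma$ and those already governed by $\omega'$.
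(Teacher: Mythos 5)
Your proof is correct and follows essentially the same route as the paper's: safety is handled identically, and properness by decomposing $FVar(\sigma(\sigma'(t)))$ into the frozen variables already present in $\sigma'(t)$ and those freshly introduced by $\sigma$, applying the respective properness hypotheses in the two cases. The only difference is organizational: you split first on $dom(\omega)$ versus $dom(\omega')$ and invoke standardization apart (via $dom(\omega)=FVar(rng(\sigma))$) to rule out the cross case, whereas the paper splits on the provenance of the frozen variable and leaves implicit the matching of each case to the correct component $\omega$ or $\omega'$ --- your version is, if anything, slightly more careful on that bookkeeping.
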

\begin{proof}
Let $(\sigma'',\omega'') = (\sigma,\omega) \circ (\sigma',\omega') = (\sigma \circ \sigma', \omega \cup \omega')$. Since $(\sigma,\omega)$ and $(\sigma',\omega')$ have their frozen variables standardized apart, we know $(\sigma'',\omega'')$ is a substitution.

We show that $(\sigma'',\omega'')$ is safe for $(t,\theta)$. Since $(\sigma',\omega')$ is safe for $(t,\theta)$, we know that $dom(\omega') \cap FVar(t) = \emptyset$. Also, since $(\sigma,\omega)$ is safe for $(\sigma',\omega')(t,\theta)$, we know that $dom(\omega) \cap FVar(\sigma'(t)) = \emptyset$. This implies that $dom(\omega) \cap FVar(t) = \emptyset$. Thus $(dom(\omega) \cup dom(\omega')) \cap FVar(t) = \emptyset$. And thus $dom(\omega\cup\omega') \cap FVar(t) = \emptyset$. Thus $(\sigma'',\omega'')$ is safe for $(t,\theta)$.

We show that $(\sigma'',\omega'')$ is proper for $(t,\theta)$.
Take an arbitrary $y \in FVar(rng(\omega''|_{FVar(\sigma''(t))}))$. We show that $y \in FVar(\sigma''(t))$.
We know that there exists a $y' \mapsto t_{y'} \in \omega''$ such that $y' \in FVar(\sigma''(t))$ and $y \in FVar(t_{y'})$.
Since $y' \in FVar(\sigma(\sigma'(t)))$, we know that either (i) $y' \in FVar(\sigma'(t))$ or (ii) for some $x \in Var(\sigma'(t))$ we have that $y' \in FVar(\sigma(x))$.
In case (i), since $(\sigma',\omega')$ is proper for $(t,\theta)$, we know that $y \in FVar(\sigma'(t))$, and thus also $y \in FVar(\sigma''(t))$.
In case (ii), since $(\sigma,\omega)$ is proper for $(\sigma',\omega')(t,\theta)$, we know $y \in FVar(\sigma(\sigma'(t))) = FVar(\sigma''(t))$.
\end{proof}

Finally, we introduce some notational conventions. Firstly, we will often denote cyclic terms $(t,\theta)$ with $t$, and denote substitutions for cyclic terms $(\sigma,\omega)$ with $\sigma$. Secondly, we introduce a notation to refer to subterms of cyclic terms. In order to do so, we define the notion \textit{decomposable form} for cyclic terms.

\begin{dfn}
A cyclic term $(t,\theta)$ is in decomposable form if $t \not\in \mathcal{Y}$ and for every direct subterm $t' \in sub(t)$ of $t$ holds that $FVar(rng(\theta|_{FVar(t')}) \subseteq FVar(t')$.
\end{dfn}
\begin{prop}
For every cyclic term $(t,\theta)$, there is an equivalent cyclic term $(t',\theta')$ in decomposable form.
\end{prop}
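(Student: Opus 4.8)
The plan is to reduce to a convenient normal form and then rebuild each direct subterm separately. By Proposition~\ref{prop:baseformnormalization} I may assume $(t,\theta)$ is in base form, and by passing once more to $(\theta(t),\theta|_{FVar(\theta(t))})$ I may also assume $t \notin \mathcal{Y}$: exactly as in that proof this application of $\theta$ leaves the cyclic term graph unchanged up to vertex labels, hence preserves $\equiv$, while the outermost symbol of $\theta(t)$ is forced to lie in $\Sigma$ because $\theta(y) \notin FVar(t)$ for all $y$. Writing $t = f(s_1,\ldots,s_n)$, the clause $t \notin \mathcal{Y}$ of decomposable form is met, and it remains to ensure $FVar(rng(\theta|_{FVar(s_i)})) \subseteq FVar(s_i)$ for each $i$, i.e.\ that the frozen variables at the root of each $s_i$ form a $\theta$-closed set. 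Note that iterating the base-form construction does not suffice: applying $\theta$ to a subterm trades each root frozen variable for the first level of its definition, which may hide old variables while exposing new ones, so a dangling reference can shift indefinitely without being removed --- e.g.\ in $(f(y_1,y_2),\set{y_1 \mapsto g(y_2),\ y_2 \mapsto h(y_1)})$ every application of $\theta$ merely flips the offending variable between $y_1$ and $y_2$.

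My main step is therefore: for each $i$ construct a cyclic term $(r_i,\psi_i)$, equivalent to $s_i$ read together with $\theta$, in which $FVar(r_i)$ is closed under $\psi_i$, and with the $FVar(r_i)$ pairwise standardized apart (using safe frozen variable renaming and the observation that it preserves $\equiv$). Setting $t' = f(r_1,\ldots,r_n)$ and $\theta' = \psi_1 \cup \cdots \cup \psi_n$, disjointness of the $FVar(r_i)$ together with each $\psi_i$ keeping $FVar(r_i)$ closed makes $\theta'$ a total mapping $FVar(t') \to \Sigma(\mathcal{X} \cup FVar(t')) \setminus FVar(t')$, so $(t',\theta')$ is a cyclic term; and since $\theta'|_{FVar(r_i)} = \psi_i|_{FVar(r_i)}$, every direct subterm $r_i$ meets the closure condition, so $(t',\theta')$ is in decomposable form.

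To build $(r_i,\psi_i)$ I unravel the regular term denoted by $s_i$. Its distinct subterms are finite in number; I pick a set $S$ of them meeting every cycle of the subterm graph and introduce a fresh frozen variable $z_C$ for each $C \in S$. I let $\psi_i(z_C)$ be the one-step unfolding of $C$ that stops, emitting the appropriate variable, exactly when a member of $S$ recurs, and I take $r_i$ to be the analogous unfolding of $s_i$ itself, carried far enough that every member of $S$ reachable from $s_i$ actually occurs in $r_i$; finiteness of the distinct subterms guarantees termination. Then $z_C$ denotes $C$, so $r_i$ denotes the same term as $s_i$, and $FVar(\psi_i(z_C))$ consists of members of $S$ reachable from $C$ and hence from $s_i$, so it is contained in $FVar(r_i)$ --- which is exactly closure of $FVar(r_i)$ under $\psi_i$. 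Equivalence of $(t',\theta')$ and $(t,\theta)$ finally follows from the cyclic term graphs: the relation linking two nodes whenever they unravel to the same regular term is a bisimulation (equal outermost symbols, and $n$-th successors again unravelling to equal terms) and it relates the roots $t$ and $t'$, both of which unravel to $f(T_1,\ldots,T_n)$ with $T_i$ the term denoted by $s_i$.

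The real obstacle is this construction of $(r_i,\psi_i)$ and the proof that $FVar(r_i)$ is $\psi_i$-closed. The delicate points are the choice of $S$ and the demand that $r_i$ mention \emph{every} reachable member of $S$: if $S$ is taken too large (say, all recurring subterms) or if the unfolding is stopped at the first cycle met, a subterm can end up referring through $\psi_i$ to a frozen variable outside its own root variables and closure breaks down. Making the reachable-variable bookkeeping go through --- a minimal, feedback-set choice of $S$ together with an unfolding that opens each chosen subterm once --- is where the actual work of the proof lies.
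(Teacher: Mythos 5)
Your argument is sound in outline but follows a genuinely different route from the paper's. The paper's own proof is a two-sentence sketch: applying a single mapping $\rho = \set{y \mapsto t_y}$ (for some $y \mapsto t_y \in \theta$) to a chosen occurrence in $t$ or in $rng(\theta)$ preserves equivalence, and finitely many such in-place unfoldings reach decomposable form; no termination measure is given. You instead rebuild each direct subterm from scratch, with fresh frozen variables indexed by a feedback vertex set $S$ of the finite subterm graph, the systems $\psi_i$ standardized apart, and equivalence established by the unravelling bisimulation. Two observations. First, your flip-flop example $(f(y_1,y_2),\set{y_1 \mapsto g(y_2),\ y_2 \mapsto h(y_1)})$ defeats only the strategy of repeatedly applying all of $\theta$ to the offending subterm; the paper's finer move --- substituting $h(y_1)$ for every occurrence of $y_2$, including the one inside $\theta(y_1)$, and then dropping $y_2$ from the domain --- settles that instance in one step, so the example does not exhibit a failure of the paper's method, though your suspicion that a naive iteration need not terminate is justified, and the paper supplies no termination argument. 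Second, what your construction buys is an evidently terminating procedure (finitely many distinct subterms, each opened a bounded number of times) that handles the genuinely delicate case of a frozen variable shared between two direct subterms by duplication; what it costs is the reachability bookkeeping you defer in your final paragraph --- that $r_i$ can be unfolded far enough to emit $z_C$ for every reachable $C \in S$, which does go through because each member of a minimal feedback set lies on a cycle and therefore recurs at arbitrary depth in the unravelling, so one occurrence can be frozen to $z_C$ while another is opened to reach the members of $S$ behind it. Both proofs are sketches; yours is the more detailed and the more robust of the two.
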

\begin{proof}[Proof (sketch)]
For any cyclic term $(t,\theta)$ and any mapping $y \mapsto t_{y} \in \theta$, applying the mapping $\rho = \set{y \mapsto t_{y}}$ to any subterm of $t$ or $rng(\theta)$ results in a cyclic term $(t',\theta')$ equivalent to $(t,\theta)$ (after restricting the domain of $\theta'$ if necessary). By applying a finite number of such equivalence preserving transformations, any cyclic term can be transformed into decomposable form.
\end{proof}

In the following, we assume without loss of generality that cyclic terms are in decomposable form. Any cyclic terms that are not in decomposable form, we implicitly transform to decomposable form.

\begin{dfn}
We say that a cyclic term $(t,\theta)$ is of the form $d(t_1,\ldots,t_n)$, for $d \in \mathcal{C} \cup \mathcal{F}$, if
$t = d(t_1,\ldots,t_n)$. We assume without loss of generality that $(t,\theta)$ is in decomposable form.
In this case, for $1 \leq i \leq n$, we let $(t_i,\theta_i)$ denote the cyclic term that is obtained by letting $\theta_i = \theta|_{FVar(t_i)}$.
\end{dfn}

Thirdly, we define substitution of subterms of cyclic terms.

\begin{dfn}
For a given cyclic term $(t,\theta)$ we define the (possibly infinite) set of positions $pos(t,\theta) \subseteq \mathbb{N}\kleene$ as follows.
\[ pos(x,\theta) = \set{\epsilon} \]
\[ pos(d(t_1,\ldots,t_n),\theta) = \set{\epsilon} \cup \setcomp{1\cdot \overline{m}}{\overline{m} \in pos(t_1)} \cup \cdots \cup \setcomp{n\cdot \overline{m}}{\overline{m} \in pos(t_n)} \]
Note that we do not explicitly handle the case for $t \in \mathcal{Y}$, since we implicitly transform cyclic terms to decomposable form.
\end{dfn}

\begin{dfn}[Fixed-position substitutions]
For given cyclic terms $(t,\theta)$ and $(t',\theta')$ (with their frozen variables standardized apart), and a position $\overline{m} \in pos(t,\theta)$ of $(t,\theta)$, we define the fixed-position substitution $(t,\theta)[(t',\theta')]_{\overline{m}}$ as:
\[ \begin{array}{r l l}
(t,\theta)[(t',\theta')]_{\overline{m}} &=& (t[t']_{\overline{m}},(\theta \cup \theta')|_{t[t']_{\overline{m}}}) \\
t[t']_{\epsilon} &=& t' \\
d(t_1,\ldots,t_n)[t']_{i\cdot \overline{m}'} &=& d(t_1,\ldots,t_{i-1},t_{i}[t']_{\overline{m}'},t_{i+1},\ldots,t_n) \\
\end{array} \]
Again, we do not explicitly handle the case for $t \in \mathcal{Y}$, since we implicitly transform cyclic terms to decomposable form. Note that since the two cyclic terms $(t,\theta)$ and $(t',\theta')$ are in decomposable form and have their frozen variables standardized apart, the result $(t,\theta)[(t',\theta')]_{\overline{m}}$ of a fixed-position substitution is a cyclic term.
\end{dfn}

Fourthly, in the following, we will denote cyclic terms $(t,\theta)$ where $\theta = \set{y_1 \mapsto t_1, \ldots, y_n \mapsto t_n}$ in Curry notation as \verb|t {y1 -> t1, ..., yn -> tn}|.

Note that in this paper, we will not make the connection between cyclic terms in the above syntactic sense and regular terms precise. The intuition is that the set of cyclic terms corresponds to the set of (finitely-branching) regular terms. Making this intuition precise is a topic of further research, and is closely related to the topic of finding a declarative semantics of our approach (see Section~\ref{sec:declsem}).

%%% ADDING TYPING
\subsubsection{Adding typing}
For the purposes of functional logic programming, we would like to extend the above machinery with the concept of types. For the sake of simplicity of presentation, we have not done this explicitly above. Extending the above with typing is, however, very straightforward. We describe the steps that have to be taken to extend the above with typing.

We need a many sorted signature $\Sigma$, to start with. Also, we explicitly partition the set $\mathcal{X}$ of variables and the set $\mathcal{Y}$ of frozen variables into sets for each sort. Then, for cyclic terms $(t,\theta)$, we pose the usual typing constraints on $t$ and on all mappings in $\theta$. For substitutions $(\sigma,\omega)$ on cyclic terms, we pose the usual typing constraints on all mappings in $\sigma$ and $\omega$. Finally, we straightforwardly add typing constraints on frozen variable renamings and fixed-position substitutions.

%%% COMPUTING UNIFICATION
\subsubsection{Computing unification}
An algorithm to compute whether a unifier for two cyclic terms (that have their frozen variables standardized apart) exists, and to compute such a unifier, is given in Algorithm~\ref{alg:unify}. This algorithm is based on the Martelli-Montanari algorithm \cite{Martelli:1982ty}.

\begin{thm}[Termination]\label{thm:termination}
The unification algorithm always terminates, no matter what choices are made.
\end{thm}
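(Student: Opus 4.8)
The plan is to adapt the classical termination argument for the Martelli--Montanari algorithm \cite{Martelli:1982ty}: exhibit a well-founded measure on the internal state of the algorithm and show that every applicable rule strictly decreases this measure in the chosen order, for any equation the algorithm may select. Since a well-founded order admits no infinite strictly descending chain, strict decrease at each step immediately gives termination of every run, which is exactly the ``no matter what choices are made'' statement. The internal state at any point is the current finite multiset of equations between cyclic (sub)terms, together with the bookkeeping the algorithm keeps in order to detect circularity, which I will treat as a set of pairs of cyclic subterms already marked as processed.

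The measure I would use is a lexicographic tuple $(p,u,s,o)$. The leading component $p$ is the number of distinct pairs of cyclic subterms that have not yet been marked as processed; this is bounded by $|V| \cdot |V'|$, where $V$ and $V'$ are the finite vertex sets of the cyclic term graphs of the two inputs $(t,\theta)$ and $(t',\theta')$, and these vertex sets are finite precisely because the inputs are regular and therefore have only finitely many distinct subterms. The component $u$ is the number of unsolved regular variables from $\mathcal{X}$, the component $s$ is the total surface size of the unresolved equations (counting constructor and function nestings in the $t$-parts, without unfolding frozen variables through $\theta$), and $o$ is the number of mis-oriented equations. The point of placing $p$ first is that the steps that unfold the cyclic structure may well enlarge $s$ and $u$, and only a decrease in the dominant component $p$ can absorb this.

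I would then carry out a case analysis over the transformation rules. Decomposition of an equation between two terms with the same outermost symbol, and unfolding of a frozen variable via $y \mapsto t_{y} \in \theta$, both amount to processing the parent pair: the pair is marked, so $p$ strictly decreases, which dominates any increase of $s$ or $u$ caused by substituting the (possibly larger) body $t_{y}$. If a step instead reaches a pair already recorded in the bookkeeping set, the algorithm discards that equation under its coinductive assumption, again strictly decreasing the measure. Elimination of a regular variable strictly decreases $u$ while leaving $p$ unchanged; orientation strictly decreases $o$ while leaving $p$, $u$ and $s$ unchanged; deletion of a trivial equation strictly decreases $s$; and the clash rule simply halts. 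In each case I would check that the components above it in the tuple do not increase.

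The hard part will be the frozen-variable and cycle rule, and it is exactly the place where cyclic terms depart from ordinary unification. In ordinary unification the occurs check forbids infinite structures and the term-size measure decreases monotonically; here cycles are intended, so unfolding $y \mapsto t_{y}$ can reintroduce $y$ itself, and neither the size of the denoted (possibly infinite) tree nor the naive size of the finite representation decreases. The crux is therefore to establish the invariant that every equation the algorithm ever processes is a pair drawn from the finite product $V \times V'$, up to the standardization apart and the base-form and decomposable-form normalizations established above. Granting this invariant, the set of markable pairs is finite, $p$ is a genuine well-founded component, and cycle-unfolding can occur only finitely often. The main obstacle is to verify that this invariant is preserved by every rule under every scheduling choice --- in particular that decomposition stays within the vertex sets because the inputs are in decomposable form, and that variable elimination, which substitutes a bound subterm for a regular variable throughout the equation set, does not manufacture a pair lying outside $V \times V'$.
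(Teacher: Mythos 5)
Your overall strategy coincides with the paper's: the paper likewise takes the Martelli--Montanari lexicographic ordering and prepends, as the most significant dimension, a quantity measuring how much of a finite budget of processable pairs remains --- concretely $maxsize(H) - size(H)$, where $H$ is the history set filled by the cycle rule (rule (9)) --- so that cycle-unfolding strictly decreases the dominant component and absorbs any growth of the lower ones, while the remaining rules are handled by the classical measure. Your component $p$ plays exactly the role of $maxsize(H)-size(H)$, and your case analysis matches the paper's (itself only sketched) claim that each action decreases the combined ordering.

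The one substantive divergence is in how the finiteness of that budget is justified, and this is precisely where your proposal has a gap that you yourself flag. The invariant you propose --- that every equation ever processed is a pair drawn from $V \times V'$, the vertex sets of the two input cyclic term graphs --- is not preserved by the algorithm: rule (6) substitutes a term for a regular variable throughout the remaining equations, and the resulting instances (e.g.\ $g(x)$ becoming $g(a)$ after applying $x \mapsto a$) need not be subterms of either input, so the set of markable pairs is not confined to $V \times V'$. The paper sidesteps this by bounding not the pairs of input subterms but the set of equations that can ever occur in $E$: since the number of regular variables in $E$ never increases, only finitely many distinct substitutions can ever be applied, each with right-hand sides of bounded size; applying a bounded family of substitutions to the finite initial $E$ yields a finite (if large) set of possible equations, and since only equations occurring in $E$ can enter $H$, the quantity $maxsize(H)$ is well defined. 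If you replace your $V \times V'$ bound by this coarser closure argument, your proof goes through along the paper's lines; as written, the well-foundedness of the leading component of your measure is exactly what remains unproven.
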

\begin{proof}[Proof (sketch)]
Similarly to the proof of termination for the Martelli-Montanari algorithm \cite{Martelli:1982ty}, we define a well-founded ordering that decreases with every action taken. To the ordering used in \cite{Martelli:1982ty}, we add (as most significant dimension), the following measure.
\[ maxsize(H) - size(H) \]
%Since each application of a mapping from $\theta$, $\theta'$ or $F$ can only be applied when not already in $H$, and adds this mapping to $H$, each mapping from $\theta$, $\theta'$ and $F$ can be applied at most once.
We know $\theta$ and $\theta'$ are fixed and finite.
Since only finitely many variables from $\mathcal{X}$ occur in $E$ (and the number of such variables never increases), the number of equations in $F$ is finitely bounded. For every substitution $x \mapsto t$ applied to terms in $E$, the size of $t$ is bounded by the size of $E$ (at the time of applying $x \mapsto t$). For every substitution $y \mapsto t \in F$, the size of $t$ is also bounded by the size of $E$ (at the time of adding $y \mapsto t$ to $F$).
For every substitution $y \mapsto t \in \theta \cup \theta'$, the size of $t$ is bounded as well.
%Since only finitely many variables occur in $E$,
There are finitely many different possible mappings that can be applied to equations in $E$. Furthermore, there exists an upper bound on this number of mappings. Thus, since $E$ is initially finite, we know there is an upper bound for the number of different equations that can occur in $E$. Since only equations occurring in $E$ can be added to $H$, the size of $H$ also has this bound. Thus $maxsize(H)$ is well-defined, and therefore this is a valid measure.

It is easy to see that with each action, this well-founded ordering decreases. Thus the algorithm terminates.
\end{proof}

\begin{thm}[Partial correctness]\label{thm:partialcorrectness}
On termination of the algorithm, a unifying substitution is returned iff a unifying substitution exists.
\end{thm}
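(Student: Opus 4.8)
The plan is to establish partial correctness by the standard two-pronged approach for unification algorithms: prove that the algorithm maintains an \emph{invariant} relating the current state of the equation set to the unifiers of the original problem, and then argue that this invariant, combined with the exit conditions, yields the claimed equivalence. Concretely, I would track the set of equations $E$ (together with the accumulated solved-form part $F$ and the fixed mappings $\theta,\theta'$) and show that at every step the set of substitutions unifying the \emph{current} configuration coincides (up to the equivalence $\equiv$ on cyclic terms, and modulo the already-committed bindings) with the set of substitutions unifying the \emph{original} pair $(t,\theta)$ and $(t',\theta')$. Since Theorem~\ref{thm:termination} guarantees termination regardless of the choices made, it suffices to verify this invariant is preserved by each individual transformation rule.

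First I would fix notation for the algorithm's state and state precisely the loop invariant: a substitution $(\sigma,\omega)$ unifies the original cyclic terms if and only if (some canonical extension of) it unifies the current equation system $E \cup F$ under the fixed frozen-variable mappings. I would then proceed rule by rule — decomposition of a structural equation $d(t_1,\ldots,t_n) = d(t_1',\ldots,t_n')$ into the componentwise equations, variable elimination for a regular variable $x \mapsto t$, the handling of frozen variables via $F$ and the fixed $\theta \cup \theta'$, and the failure rules (clash of outermost symbols from $\Sigma$, or the cyclic-term analogue of the occurs-check) — checking in each case that the solution set is unchanged. Here I would lean heavily on the earlier machinery: the definition of \textbf{Equivalence} via cyclic term graph bisimulation, the \textbf{safe} and \textbf{proper} conditions so that substitution application is well-defined, and the \textbf{Composition of substitutions} proposition so that the bindings accumulated along the run compose into a single well-formed substitution that is safe and proper for the originals.

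For the two directions at termination: if the algorithm halts in success, the accumulated $F$ (with $\theta,\theta'$) is in solved form, and I would read off a substitution $(\sigma,\omega)$ from it and verify directly, using the bisimulation characterization of $\equiv$, that it unifies $(t,\theta)$ and $(t',\theta')$ — essentially that a solved form over cyclic terms induces a bisimulation between the two instantiated cyclic term graphs. Conversely, if a unifier exists, the invariant forbids the algorithm from reaching the failure rules (a symbol clash or an unsolvable frozen/occurs condition would contradict the existence of a unifier for the current system, hence for the original), so by termination it must halt in the success branch.

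The main obstacle I anticipate is the treatment of frozen variables and cycles, which is exactly where this algorithm departs from ordinary Martelli--Montanari. In the classical setting the occurs-check rejects $x = t$ when $x \in Var(t)$; here such a self-reference is \emph{legitimate} and is meant to be captured by a frozen variable and a mapping in $\theta$, so the usual occurs-check argument cannot be transplanted. The delicate point is showing that the invariant is preserved when a cycle is detected and folded into the frozen-variable mappings: I must argue that introducing the new frozen binding does not change the set of unifiers \emph{up to $\equiv$}, and that the bookkeeping keeps the configuration a legitimate cyclic term (respecting the $\Sigma(\mathcal{X} \cup FVar(t)) \backslash FVar(t)$ constraint and the safe/proper conditions). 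Getting the bisimulation witness right in this folding step, and confirming that the termination measure of Theorem~\ref{thm:termination} is compatible with it, is where the real work lies; the structural and variable-elimination rules should follow the classical correctness argument with only cosmetic changes.
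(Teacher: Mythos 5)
Your proposal follows essentially the same route as the paper: an invariant-preservation argument over the equation set (the paper phrases the invariant as a bisimulation condition on every equation in $E \cup H$ under a fixed candidate $(\sigma,\omega)$), verified rule by rule, with success read off from the solved form $(E,F)$ and failure traced solely to the symbol-clash rule. You correctly isolate the frozen-variable/cycle-folding rules as the non-classical part (the paper dismisses them as "straightforward"), and your only slip — listing an occurs-check failure rule, which this algorithm does not have since rule (7) folds self-reference into a frozen binding instead of failing — is one you already correct yourself in the final paragraph.
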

\begin{proof}[Proof (sketch)]
It is easy to see that if an $(E,F)$ is returned, then it is a substitution. If $E$ contains anything other than equations of the form $x \doteq t$, where $x$ occurs only once in $E$, then there are still actions that can be performed. Also, for any mapping $y \mapsto t \in F$, we know that $y$ occurs in the rhs of some mapping in $E$, because of the structure of the only rule introducing mappings to $F$. Because of the structure of this same rule, we have that for any $y$ that occurs in the rhs of a mapping in $E$, there is a mapping $y \mapsto t \in F$. Furthermore, because of the structure of this same rule, we know that $(E,F)$ is safe and proper for both $(t,\theta)$ and $(t',\theta')$.

In order to see that $(E,F)$ is a unifier of $(t,\theta)$ and $(t',\theta')$, consider the following condition, in which we fix suitable mappings $\sigma$ and $\omega$.
\begin{center}
There exists a bisimulation between the cyclic term graphs\\
of $(\sigma(t),(\theta \cup \theta' \cup \omega)|_{\sigma(t)})$ and $(\sigma(t'),(\theta \cup \theta' \cup \omega)|_{\sigma(t')})$, for each $t \doteq t' \in E \cup H$.
\end{center}

This condition holds on successful termination of the algorithm, for $(\sigma,\omega) = (E,F)$. For all equations $x \doteq t_{x} \in E$, this holds since $x \mapsto t_{x} \in \sigma$. For equations $y \doteq t_{y} \in H$ the condition holds as well. We know $y \doteq t_{y}$ can only be in $H$ if it was in $E$ before. If there is no such bisimulation (as specified in the condition), by the structure of the rules, the algorithm would return with $\bot$. However, since the algorithm terminated successfully (by assumption), we know such a bisimulation must exist.
%\todo[color=yellow,noline]{Check details of proof (sketch).}

We know that the algorithm can only halt with failure because of rule (2). By the structure of rule (2), we know if the algorithm halts with failure, the condition does not hold (for any $\sigma$). It suffices to show, for each (other) action that can be performed, that the condition holds before applying the action iff it holds after applying the action. The cases for rules (1), (3), (4) and (5) are trivial. The cases for rules (6), (7), (8) and (9) can be proven straightforwardly. This proves that the condition holds on initiating the algorithm, for $(\sigma,\omega) = (E,F)$, if the algorithm halts with $(E,F)$. It also proves that the condition holds for no $(\sigma,\omega)$ on initiating the algorithm if the algorithm halts with $\bot$.
%Because $dom(\theta)$, $dom(\theta')$ and $dom(F)$ share no frozen variables, 
From this claim, it follows that the algorithm halts with a unifier $(E,F)$ iff such a unifier exists.
%\textbf{TODO: \ldots}\todo[color=red,noline]{Finish proof sketch.}
% TODO: each rule replaces the sets with equivalent sets; in the sense:
%$\doteq$ is part of a bisimulation in the graph of $E$ with $\theta,\theta',F$ applied to it (before iff after)
\end{proof}

\begin{algorithm}[h!]
  \caption{Unification algorithm for cyclic terms.}
  \vspace{10pt}
  \algname{unify}{$(t,\theta),(t',\theta')$}
  \alginout{Two cyclic terms $(t,\theta)$ and $(t',\theta')$, with frozen variables standardized apart.}
                  {A unifier $(\sigma,\omega)$ of the two cyclic terms if it exists, or else $\bot$.}
  \vspace{10pt}
  
  \begin{algtab} % Needed to be able to refer to the algorithm float.
  \end{algtab}
  
  Let $E = \set{t \doteq t'}$. Let $H = \emptyset$. Let $F = \emptyset$. While some action can be performed, and no failure has occurred, do the following. Nondeterministically choose from the set $E$ of equations an equation of a form below (such that the corresponding condition holds) and perform the corresponding action. Note that $x$ ranges over variables in $\mathcal{X}$, $y$ over frozen variables in $\mathcal{Y}$, and $t$ over terms in $\Sigma(\mathcal{X} \cup \mathcal{Y})$.
  
  \vspace{10pt}
  \begin{small}
  \begin{tabular}{p{.5cm} p{4cm} p{2.2cm} p{5cm}}
  &\textbf{form}&\textbf{condition}&\textbf{action}\\
  \hline
  (1)&$f(s_1,\ldots,s_n) \doteq f(t_1,\ldots,t_n)$&$\top$&replace by the equations $s_1 \doteq t_1,$ $\ldots,$ $s_n \doteq t_n$\\
  (2)&$f(s_1,\ldots,s_n) \doteq g(t_1,\ldots,t_n)$&$f \neq g$&halt with failure\\
  (3)&$x \doteq x$&$\top$&delete the equation\\
  (4)&$t \doteq x$&$t \not\in \mathcal{X}$&replace by the equation $x \doteq t$\\
  (5)&$t \doteq y$&$t \not\in \mathcal{X} \cup \mathcal{Y}$&replace by the equation $y \doteq t$\\
  (6)&$x \doteq t$&$x \not\in Var(t)$ and $x$ occurs in another equation&perform the substitution $x \mapsto t$ on all other equations\\
  (7)&$x \doteq t$&$x \in Var(t)$&for a fresh $y$: replace by the equation $x \doteq y$; apply the substitution $x \mapsto y$, on all other equations; and add $y \mapsto t\set{x \mapsto y}$ to $F$\\
  (8)&$y \doteq t$&$y \doteq t \in H$&delete the equation\\
  (9)&$y \doteq t$&$y \doteq t \not\in H$&add $y \doteq t$ to $H$; and apply the unique substitution $y \mapsto t'$ from $\theta \cup \theta' \cup F$ to $t$\\
  \end{tabular}
  \end{small}

  When the algorithm terminates with failure, return $\bot$. Otherwise, return $(E,F)$, where $E$ is seen as a mapping from $\mathcal{X}$ to $\Sigma(\mathcal{X} \cup \mathcal{Y})$.

  \label{alg:unify}
\end{algorithm}

\subsection{Systems of equations}

Another possible representation of regular (infinite) terms would be systems of equations. Such representations of infinite terms can, for instance, be found in (certain implementations of) Prolog and several languages used for coinductive specifications. Because of their use in different contexts, the reader might be familiar with systems of equations, and might appreciate a motivation for the choice of using cyclic terms instead of systems of equations. In fact, there is a one-to-one correspondence between cyclic terms and systems of equations. Below, we illustrate how cyclic terms can be straightforwardly transformed to systems of equations, and vice versa. A major advantage of the use of cyclic terms is that this manner of representing regular terms allows us to to define the operational semantics in such a way that the connection to the operational semantics of regular Curry is very clear (see Section~\ref{sec:compstrat}).

\subsubsection{Correspondence with cyclic terms}

We informally describe how systems of equations can be transformed straightforwardly to cyclic terms, and vice versa. We will do so by means of an example. A formal definition of systems of equations and the transformations to and from cyclic terms is beyond the scope of this paper.

\begin{ex}
Consider the regular infinite tree represented by the cyclic term, for symbols $f,g,h \in \Sigma$:
\[ (f(g(y,y')),\set{y \mapsto g(y,y'), y' \mapsto h(y')}) \]
This regular tree can also be represented by the following set of equations, with root $x$:
\[ \set{ x = f(g(x',x'')), \quad x' = g(x',x''), \quad x'' = h(x'') } \]
\end{ex}

It is easy to see that a cyclic term $(t,\theta)$ can be written as a system of equations by mapping each frozen variable to a (regular) variable, and (possibly) adding an extra variable as root to refer to $t$.

Conversely, a (finite) system of equations can be written as a cyclic term $(t,\theta)$ as follows.
We map the variables that occur on the left hand side of equations to (distinct) frozen variables.
We take the frozen variable corresponding to the distinguished variable in the equation as $t$.
We let $\theta$ be the mappings corresponding to the equations.
Possibly, we need to unfold $t$ by applying mappings from $\theta$ until all frozen variables occurring in $\theta$ also occur in $t$.

%%%
%%% COMPUTATIONAL STRATEGY
%%%
\section{Computational strategy}\label{sec:compstrat}
With the (well-behaved) data structures to represent regular terms in place, we are ready to define the computational strategy that uses generalized circular coinduction. Note that in the following, we use cyclic terms (often simply denoted terms) and the corresponding notions of substitutions and unifiers.

\subsection{Informally}\label{sec:compstratinformal}

The intuition behind our computational strategy is fairly simple. The idea is to detect circularity in the reductions performed, and breaking out of such a circle by assigning a possible value to the (function symbol rooted) term which is evaluated repeatedly. What the possible values are is something that needs to be specified by the programmer. The reason for this is discussed in Section~\ref{sec:restrictions}.

More concretely, this results in the following computational mechanism, based on the needed narrowing mechanisms from \cite{Hanus:1997p1, Antoy:2000p256}. Analogously to the method of detecting cyclic behavior in \cite{Simon:2006p226}, in every derivation, we keep track of the (function symbol rooted) expressions on which narrowing is applied in the derivation, together with the total term they appear in and their position in this total term. As soon as the term $t$ (appearing as subterm of the total term $t_{tot}$) on which narrowing is about to be applied, and some term $t'$ (appearing as subterm of the total term $t'_{tot}$) on which narrowing is applied previously, unify (with unifier $\sigma$), we do the following.
\begin{itemize}
  \item We (nondeterministically) guess a possible resulting value for the function symbol rooted expression occurring at the root of both $t$ and $t'$, from the possibilities specified by the programmer.
  \item To the derivation, we add the constraint $\sigma(t^r_{tot}) \doteq \sigma(t'^r_{tot})$, where $t^r_{tot}$ is the term $t_{tot}$ in which the subterm $t$ is replaced by the guessed result, and $t'^r_{tot}$ is defined similarly.
  \item We continue the derivation with $\sigma(t^r_{tot})$.
\end{itemize}

Also, if during the derivations we encounter constraints of the form $t \doteq t'$,
%where $t$ and $t'$ are constructor terms, 
we try to unify them.

%%% FORMAL DEFINITION
\subsection{Formal definition}
In order to formally define our computational mechanism, we recall some preliminary notions, known from the literature on functional logic programming and Curry. We define patterns and (partial) definitional trees.

\begin{dfn}[Taken from Definition~2 in \cite{Antoy:2000p256}]
A pattern is a term $f(t_1,\ldots,t_n)$, where $n$ is the arity of $f \in \mathcal{F}$ and each $t_i$, for $1 \leq i \leq n$, is a constructor term.
\end{dfn}

\begin{dfn}[Definition~12 in \cite{Antoy:2000p256}]
$\mathcal{T}$ is a partial definitional tree (pdt) with pattern $\pi$ iff one of the following cases holds:
\begin{itemize}
  \item $\mathcal{T} = branch(\pi,o,\mathcal{T}_1,\ldots,\mathcal{T}_k)$, where $\pi$ is a pattern, $o$ is the occurrence of a variable in $\pi$, the sort of $\pi|_o$ has constructors $c_1,\ldots,c_k$ for some $k>0$, and for all $i \in \set{1,\ldots,k}$, $\mathcal{T}_i$ is a pdt with pattern $\pi[c_i(X_1,\ldots,X_n)]_o$, where $n$ is the arity of $c_i$ and $X_1,\ldots,X_n$ are new distinct variables.
  \item $\mathcal{T} = leaf(\pi)$, where $\pi$ is a pattern.
\end{itemize}
%We abbreviate with $rule(l \rightarrow r)$ the fact that $leaf(\pi)$ is a pdt representing some rule $l \rightarrow r$, where $\sigma$ is the renaming substitution such that $\sigma(l) = \pi$. %also used: $rule(\pi,\sigma(l) \rightarrow \sigma(r))$
\end{dfn}

\begin{dfn}[Taken from \cite{Antoy:2000p256} and \cite{Hanus:1997p1}]
Let $\mathcal{R}$ be a rewrite system. $\mathcal{T}$ is a definitional tree of an operation $f$ iff $\mathcal{T}$ is a pdt whose pattern argument is $f(X_1,\ldots,X_n)$, where $n$ is the arity of $f$ and $X_1,\ldots,X_n$ are new distinct variables, and for every rule $l \rightarrow r$ of $\mathcal{R}$ with $l = f(t_1,\ldots,t_n)$, there exists a leaf $leaf(\pi)$ of $\mathcal{T}$ such that $l$ is a variant of $\pi$, and we say that the node $leaf(\pi)$ represents the rule $l \rightarrow r$,
in which case we will also write $leaf(l \rightarrow r)$.
We write $pattern(\mathcal{T})$ for the pattern of a definitional tree $\mathcal{T}$ and $DT$ for the set of all definitional trees.
\end{dfn}

Besides the notions used in the usual computational mechanism of Curry, we will need a few additional notions to make our computational strategy work. We define memory configurations, and several types of expressions used in the computation.

\begin{dfn}
We denote with $Mem$ the set of all memory configurations $\mathcal{P}( CT \times CT \times \mathbb{N}\kleene )$.
A memory configuration thus is a set of triples consisting of two cyclic terms and a sequence of natural numbers (representing a position in the second cyclic term).
\end{dfn}

\begin{dfn}
We denote with $EQ$ the set of all equations of the form $t \doteq t'$, where $t,t' \in CT$, and the trivial (dummy) equation $\top$.
\end{dfn}

\begin{dfn}[Based on definitions in \cite{Hanus:1997p1}]
An answer expression is a pair $\langle \sigma, e \rangle$, consisting of a substitution $\sigma$ and an expression $e$. An answer expression $\langle \sigma, e \rangle$ is solved if $e$ is a constructor term.
A disjunctive expression is a (multi-)set of answer expressions $\set{\langle \sigma_1, e_1 \rangle, \ldots, \langle \sigma_n, e_n \rangle}$, sometimes written as $\langle \sigma_1,e_1 \rangle \vee \ldots \vee \langle \sigma_n,e_n \rangle$. The set of all disjunctive expressions is denoted with $\mathcal{D}$.
A memory answer expression is a pair $\langle \sigma, e, M \rangle$, where $\langle \sigma,e \rangle$ is an answer expression, and $M \in Mem$ is a memory configuration. A memory answer expression $\langle \sigma, e, M \rangle$ is solved if the answer expression $\langle \sigma, e \rangle$ is solved.
A disjunctive memory expression is a (multi-)set of memory answer expressions. Notation is analogous to the notation of disjunctive expressions. The set of all disjunctive memory expressions is denoted $\mathcal{D}_{Mem}$.
A memory constraint answer expression is a pair $\langle \sigma, e, M, eq \rangle$, where $\langle \sigma,e,M \rangle$ is a memory answer expression, and $eq \in EQ$ is an equation.
A disjunctive memory constraint expression is a (multi-)set of memory constraint answer expressions. Notation is analogous to the notation of disjunctive expressions. The set of all disjunctive memory constraint expressions is denoted $\mathcal{D}_{Mem,EQ}$.
\end{dfn}

In order to handle the additional types of expressions defined above, we adapt several auxiliary functions known from the usual computational strategy of Curry.

\begin{dfn}We define the function $pair: (\mathcal{D}_{Mem} \cup \set{suspend}) \times EQ \rightarrow \mathcal{D}_{Mem,EQ} \cup \set{suspend}$ as follows.
\[ pair(suspend,eq) = suspend \]
\[ pair(\set{\langle \sigma_1,e_1,M_1\rangle, \ldots, \langle \sigma_n,e_n,M_n \rangle},eq) = \set{\langle \sigma_1,e_1,M_1,eq\rangle,\ldots,\langle \sigma_n,e_n,M_n,eq\rangle} \]
\end{dfn}

\begin{dfn}[Based on definitions in \cite{Hanus:1997p1}] We define the auxiliary functions $compose$ and $replace$ as follows.
Note that the function $cst$ will be defined below.
\[ compose(t,\mathcal{T},\sigma,t',o',M) = \left \{ \begin{array}{l l}

\set{\langle\sigma,t,M\rangle}&\mbox{if $cst(t,\mathcal{T},t',o',M) =$}\\
&\mbox{$suspend$}\\

\\

\{\langle\sigma_1 \circ \sigma,t_1,M_1\rangle, \ldots,&
\mbox{if $cst(t,\mathcal{T},t',o',M) =$}\\
\langle \sigma_n \circ \sigma,t_n,M_n\rangle\}&
\mbox{$\{\langle\sigma_1,t_1,M_1\rangle, \ldots, \langle\sigma_n,t_n,M_n\rangle\}$}\\

\end{array} \right . \]
\[ replace(t,o,suspend) = suspend \]
\[ replace(t,o,\set{\langle\sigma_1,t_1,M_1,eq_1\rangle,\ldots,\langle\sigma_n,t_n,M_n,eq_n\rangle}) = \]\[ \set{\langle\sigma_1,\sigma_1(t)[t_1]_o,M_1,eq_1\rangle,\ldots,\langle\sigma_n,\sigma_1(t)[t_n]_o,M_n,eq_n\rangle} \]

\end{dfn}

With all the preliminary notions in place, we can now define formally what possible guesses for circular behaviour are.

\begin{dfn}
Possible guesses are defined by a partial function $\rho : CT \rightarrow \mathcal{P}(CT)$. We say that $t'$ is a possible guess for a cyclic term $t \in CT$ if $t \in dom(\rho)$ and $t' \in \rho(t)$.
\end{dfn}

We are now ready to define the two functions $cs$ and $cst$ that handle the computational steps in derivations.

\begin{dfn}[Based on definitions in \cite{Hanus:1997p1}]\label{dfn:cs} The function $cs : CT \times CT \times \mathbb{N}\kleene \times Mem \rightarrow \mathcal{D}_{Mem,EQ} \cup \set{suspend}$, designed to perform a computation step, is defined as follows.
\begin{small}
\[ \begin{array}{r l l l r}

cs(x,t_{tot},o,M)&=&suspend&\mbox{for all variables $x \in \mathcal{X}$}&(1)\\
\\
cs(t \doteq t',t_{tot},o,M)&=&\set{\langle \sigma, \top, M, \top \rangle}&\mbox{if $\sigma$ unifies $t$ and $t'$}&(2)\\
%&&&\mbox{symbols, and $t$ and $t'$ contain no function}&(2)\\
\\
cs(f(t_1,\ldots,t_n),t_{tot},o,M)&=&pair( cst(f(t_1,\ldots,t_n),\mathcal{T},t_{tot},o,&\mbox{if $\mathcal{T}$ is a definitional}&\\
&&M),\top)&\mbox{tree for $f$ and rule (2)}&\\
&&&\mbox{does not apply}&(3.1)\\
&&&\\
&&\cup\\
\\
&&\{ \langle \sigma, g,M,\sigma(t_{tot})[g]_{o} \doteq \sigma(t'_{tot})[g]_{o'}\rangle&\mbox{where $t = f(t_1,\ldots,t_n)$}\\
&&|\ \langle t',t'_{tot},o'\rangle \in M, \mbox{$\sigma$ unifies $t$ and}&\\
&&\mbox{$t'$, $g$ is a possible guess for $\sigma(t)$} \}&&(3.2)\\
\\
cs(c(t_1,\ldots,t_n),t_{tot},o,M)&=&replace(c(t_1,\ldots,t_n),k,&\mbox{if $cs(t_i,t_{tot},o\cdot i,M) =$}\\
&&cs(t_k,t_{tot},o\cdot k,M))&\mbox{$suspend$, for all $1 \leq i < k$}\\
&&&\mbox{and $cs(t_k,t_{tot},o\cdot k,M) \neq$}&\\
&&&\mbox{$suspend$}&(4)\\
\\
cs(c(t_1,\ldots,t_n),t_{tot},o,M)&=&suspend&\mbox{if $cs(t_i,t_{tot},o\cdot i,M) =$}&\\
&&&\mbox{$suspend$ for all $1 \leq i \leq n$}&(5)\\
\end{array} \]
\end{small}
\end{dfn}

Notice the use of nondeterminism in Definition~\ref{dfn:cs}, in particular in case (3). In order to allow the programmer to use any form of nondeterminism, we opt for as much nondeterminism in the computational strategy as possible. The choice to break cyclicity in derivations (case (3.2)) or simply continue the derivation (case (3.1)) is nondeterministic. When breaking cyclicity, only the single currently evaluated term is replaced by a possible guessed value, not all occurrences of this term in the answer expression. Also, we do not remember what guessed values we have applied in a derivation. This way, different guesses can be applied to (several occurrences of) the same term in one derivation.

\begin{dfn}[Based on definitions in \cite{Hanus:1997p1}]
The function $cst: CT \times DT \times CT \times \mathbb{N}\kleene \times Mem \rightarrow (\mathcal{D}_{Mem} \cup \set{suspend})$, designed to perform the application of a reduction (narrowing) step, is defined as follows.
\begin{small}
\[ cst(t,leaf(l \rightarrow r),t',o',M) = \set{\langle id,\sigma(r), M \cup \set{\langle t,t',o'\rangle}\rangle} \quad \mbox{if $\sigma$ is a substitution with $\sigma(l) = t$} \]
\[ \begin{array}{r}
cst(t,branch(\pi,o,\mathcal{T}_1,\\
\ldots,\mathcal{T}_k),t',o',M)\\
\end{array} =
\left \{ \begin{array}{l l}

cst(t,\mathcal{T}_i,t',o',M)&\mbox{if $t|_o = c(t_1,\ldots,t_n)$ and}\\
&\mbox{$pattern(\mathcal{T}_i)|_o = c(X_1,\ldots,X_n)$}\\
\\
\emptyset&\mbox{if $t|_o = c(t_1,\ldots,t_n)$ and}\\
&\mbox{$pattern(\mathcal{T}_i)|_o \neq c(\ldots)$, $i = 1,\ldots,k$}\\
\\
\bigcup^k_{i=1} compose(\sigma_i(t),\mathcal{T}_i,\sigma_i,t',o',M)&\mbox{if $t|_o = X$ and}\\
&\mbox{$\sigma_i = \set{X \mapsto pattern(\mathcal{T}_i)|_o}$}\\
\\
replace(t,o,cs(t|_o,t',o'\cdot o,M))&\mbox{if $t|_o = f(t_1,\ldots,t_n)$}\\
\end{array} \right . \]

\end{small}
\end{dfn}

The main difference between the above definitions for the functions $cs$ and $cst$ and their definitions in \cite{Hanus:1997p1} is that in the above definitions we keep track of (1) a memory configuration $M$, (2) the total term $t_{tot}$ (resp. $t'$) on which the computation step takes place and (3) the position $o$ (resp. $o'$) in this total term that is currently being evaluated. Also, with each application of a reduction rule (the first case for $cst$), we add an appropriate entry to the memory configuration. This allows us to resolve the circular behavior in case (3.2) for $cs$, exactly as described in Section~\ref{sec:compstratinformal}.

We remember each term on which a reduction step is applied (the first case for $cst$). It would be sufficient to remember only those terms for which a possible guess is available. This would result in an equivalent (yet in principle more efficient) computational mechanism. For the sake of simplicity of presentation, we don't make this distinction.

Finally, we introduce a few standard reduction rules and we define what derivations are.
We use the following definitions of $\wedge$ and $\Rightarrow$, as in \cite{Hanus:1997p1}.
\[ \top \wedge X \rightarrow X \]
\[ X \wedge \top \rightarrow X \]
\[ (\top \Rightarrow X) \rightarrow X \]

\begin{dfn}[Based on definitions in  \cite{Hanus:1997p1}]
A rewriting step for a disjunctive memory expression (denoted with $\rightarrow$) goes as follows.
\[ D \vee \langle \sigma, e, M \rangle \vee D' \rnarrow D \vee \langle \sigma_1 \circ \sigma, eq_1 \Rightarrow e_1, M_1 \rangle \vee \ldots \vee \langle \sigma_n \circ \sigma, eq_n \Rightarrow e_n, M_n \rangle \vee D' \]
\[ \mbox{if $\langle \sigma, e, M\rangle$ is unsolved and $cs(e,e,\epsilon,M) = \set{\langle \sigma_1, e_1, M_1, eq_1 \rangle, \ldots, \langle \sigma_n, e_n, M_n, eq_n \rangle}$} \]
Here, we let $\top \Rightarrow t$ denote $t$.
A derivation is a finite sequence of rewriting steps. A derivation for a term $e$ is a derivation starting with the disjunctive memory expression $\langle \emptyset, e, \emptyset \rangle$.
\end{dfn}

%%% RESTRICTIONS ON THE ASSUMPTIONS
\subsection{Restrictions on the assumptions}\label{sec:restrictions}

The need to specify explicitly what are possible resulting values for a term that is evaluated repeatedly in a circular derivation, is illustrated nicely by the following example. Consider the following program, given in Curry syntax.

\begin{verbatim}
and [] = True
and (x:xs) = x && (and xs)

trues = True:trues
\end{verbatim}

Without any restrictions on the resulting values that can be guessed in circular derivations, we would be able to derive both \verb|True| and \verb|False| from \verb|and trues| (by guessing \verb|True| and \verb|False|, respectively, as the resulting value of \verb|and trues|). However, if we want to interpret \verb|and| as conjunction over (finite and infinite) lists of Boolean values, only the answer \verb|True| would be appropriate. Also, if we want to interpret \verb|and| as the predicate denoting all finite lists of Boolean values whose conjunction is true, only the answer \verb|False| would be appropriate. Since the interpretation is to be determined by the programmer, we need to let the programmer specify such choices. The possible guesses can intuitively be considered as the possible meaning of cyclic behavior.

An advantage of placing explicit restrictions on the possible values for such nondeterministic steps in the derivation, is that the number of possibilities is reduced enormously. This obviously has a positive effect on the efficiency of computation.

How to specify the possible resulting values for each situation is a question that needs to be investigated further. With a program, the user needs to specify possible guesses with a partial function $\rho: CT \rightarrow \mathcal{P}(CT)$. A concise specification language in which every (reasonable) choice of values can be expressed is desirable. In this paper we will describe the possible guesses with rewrite rules of the form \verb|f| $\leadsto$ \verb|v| (where \verb|f| is a term, and \verb|v| a set of terms), that get the following interpretation. If a term unifies with \verb|f| with unifier $\theta$, the possible guesses of the value of \verb|f| are the values in the set $\theta$(\verb|v|). When \verb|v| contains the anonymous variable \verb|_|, by convention, we let the corresponding possible guess $\theta(\verb|_|)$ be a fresh variable.

%%%
%%% EXAMPLES
%%%
\section{Examples}\label{sec:examples}

In order to illustrate the merit of our approach, we will discuss several examples. These examples show that certain problems can be programmed and solved in a very intuitive fashion in our approach. We will also argue that the problems exemplified here cannot be solved in a similarly intuitive manner in regular Curry. A number of derivations for the following examples can be found in Appendix~\ref{sec:examplederivations}.

%%% EXAMPLE 1 %%%
\subsection{Infinite lists}

In the program $\mathcal{P}_1$ in Figure~\ref{fig:ex1}, we consider infinite lists. This example shows that it is possible to use the usual definitions of conjunction (\verb|and|), quantifiers (\verb|forall| and \verb|exists|) and concatenation (\verb|append|) to extend these functions to (regular) infinite lists, with the intended interpretation. Also, we can straightforwardly introduce a function \verb|inf| that decides whether infinitely many elements of a list satisfy a certain property.

\begin{figure}[h]
\begin{center}
\textbf{Reduction rules:}\\
\begin{verbatim}
and [] = True
and (x:xs) = x && (and xs)

map f [] = []
map f (x:xs) = (f x):(map f xs)

forall f xs = and (map f xs)
exists f xs = not (and (map (\x -> not (f x)) xs))

inf f xs = inf' f False (loop xs)
inf' f b (x:xs) = inf' f ((f x) || b) xs

loop (x:xs) = loop xs

append [] ys = ys
append (x:xs) ys = x:(append xs ys)

isEven n = mod n 2 == 0
isOdd n = not (isEven n)

ones = 1:ones
twos = 2:twos
\end{verbatim}

\vspace{5pt}\textbf{Assumption possibilities:}\\
\begin{tabular}{l l l}
\verb|and _| & $\leadsto$ & \verb|{True}| \\
\verb|inf' _ b _| & $\leadsto$ & \verb|{b}| \\
\verb|loop x| & $\leadsto$ & \verb|{x}| \\
\verb|append xs _| & $\leadsto$ & \verb|{_}| \\
\end{tabular}
\end{center}
\caption{Example program $\mathcal{P}_1$.}
\label{fig:ex1}
\end{figure}

We can make the following derivations by using program $\mathcal{P}_1$. These example derivations illustrate that the operations \verb|and|, \verb|forall|, \verb|exists|, \verb|inf| and \verb|append| get their intended interpretation on infinite lists.

\vspace{10pt}\begin{tabular}{l l l l}
\textbf{term}&&\textbf{value}&\textbf{substitution}\\
\verb|forall isOdd ones| &$\rightarrow\kleene$& \verb|True|\\
\verb|exists isOdd twos| &$\rightarrow\kleene$& \verb|False|\\
\verb|inf isOdd (2:ones)| &$\rightarrow\kleene$& \verb|True|\\
\verb|inf isEven (2:ones)| &$\rightarrow\kleene$& \verb|False| \\
\verb|append ones [2,3]| &$\rightarrow\kleene$& \verb|y {y -> 1:y}|\\
\end{tabular}\vspace{10pt}

Regular Curry also supports infinite lists up to a certain extent, by means of lazy evaluation. We can, for instance, consider the infinite list \verb|ones|, and reason with such infinite lists. However, reasoning on infinite lists in regular Curry is restricted to reasoning over (unbounded) finite sublists. For instance, for a conjunction over an infinite list of Boolean values, only cases in which the resulting value is \verb|False| will terminate. In our approach, reasoning over (regular) infinite lists is not restricted in such a fashion. Of course, the reasoning used in our approach could be implemented in regular Curry as well, but this would require additional data types and reasoning methods to be made explicit. This would prevent a natural way of modelling.

%%% EXAMPLE 2 %%%
\subsection{B\"uchi automata}

In the program $\mathcal{P}_2$ in Figure~\ref{fig:ex2}, we consider B\"uchi automata, and in particular the example automaton given below. Note that the definitions for \verb|inf| and \verb|loop| are exactly the same as in $\mathcal{P}_1$ in Figure~\ref{fig:ex1}. This example shows that we can directly encode the definition of a B\"uchi automaton, and the acceptance conditions of B\"uchi automata, resulting in a mechanism to decide acceptance (of regular infinite lists) of the particular automaton.

\vspace{5pt}
\begin{center}
\begin{tikzpicture}[->,>=stealth',shorten >=1pt,auto,node distance=2.8cm,semithick]
  \tikzstyle{every state}=[]
  \node[state,initial] (1) {$1$};
  \node[state,accepting] (2) [right of=1] {$2$};
  \path (1) edge [loop below] node {$a$} (1);
  \path (2) edge [loop below] node {$b$} (2);
  \path (1) edge [] node {$a$} (2);
\end{tikzpicture}
\end{center}

\begin{figure}[h]
\begin{center}
\textbf{Reduction rules:}\\
\begin{verbatim}
-- particular automaton
trans 1 a = 1
trans 1 a = 2
trans 2 b = 2

initial = 1

final = 2

-- mechanism
path q (s:ss) = q:(path (trans q s) ss)
accept string = inf (\x -> x == final) (path initial string)

inf f xs = inf' f False (loop xs)
inf' f b (x:xs) = inf' f ((f x) || b) xs

loop (x:xs) = loop xs
\end{verbatim}

\vspace{5pt}\textbf{Assumption possibilities:}\\
\begin{tabular}{l l l}
\verb|inf' _ b _| & $\leadsto$ & \verb|{b}| \\
\verb|loop x| & $\leadsto$ & \verb|{x}| \\
\end{tabular}

\end{center}
\caption{Example program $\mathcal{P}_2$.}
\label{fig:ex2}
\end{figure}

We can make the following derivations by using program $\mathcal{P}_2$. These example derivations show that the direct encoding of B\"uchi automata in $\mathcal{P}_2$ lead to a direct implementation that decides acceptance of infinite sequences. Note that this example shows that we can directly use nondeterminism (the B\"uchi automaton encoded in $\mathcal{P}_2$ is nondeterministic).

\vspace{10pt}\begin{tabular}{l l l l}
\textbf{term}&&\textbf{value}&\textbf{substitution}\\
\verb|accept (a:a:y) {y -> b:y}| &$\rightarrow\kleene$& \verb|True|\\
\verb|accept x| &$\rightarrow\kleene$& \verb|True| & \verb|{x -> a:y {y -> b:y}}|\\
\end{tabular}\vspace{10pt}

Of course, it is also possible to encode B\"uchi automata in regular Curry. However, encoding the acceptance conditions is not as straightforward as in our approach. In that case, namely, one has to explicitly represent (cyclic) paths and the existence of a final state in cycles of these paths.

%%% EXAMPLE 3 %%%
\subsection{Bisimulation in Kripke structures}

In the program $\mathcal{P}_3$ in Figure~\ref{fig:ex3}, we consider Kripke structures, and in particular the example structure given below. Note that the definitions for \verb|and|, \verb|forall| and \verb|exists| are exactly the same as in $\mathcal{P}_1$ in Figure~\ref{fig:ex1}. This example shows that we can directly encode the definition of several Kripke structures, and the bisimulation conditions of Kripke structures, resulting in a mechanism to decide bisimulation of the particular Kripke structures.

\vspace{5pt}
\begin{center}
\begin{tikzpicture}[->,>=stealth',shorten >=1pt,auto,node distance=2.8cm,semithick]
  \tikzstyle{every state}=[]
  \node[state] (1) {$1 (p)$};
  \path (1) edge [loop below] node {$$} (1);
\end{tikzpicture}
\hspace{20pt}
\begin{tikzpicture}[->,>=stealth',shorten >=1pt,auto,node distance=2.8cm,semithick]
  \tikzstyle{every state}=[]
  \node[state] (2) {$2 (p)$};
  \node[state] (3) [right of=2] {$3 (p)$};
  \path (2) edge [loop below] node {$$} (2);
  \path (3) edge [loop below] node {$$} (3);
  \path (2) edge [bend left] node {$$} (3);
  \path (3) edge [bend left] node {$$} (2);
\end{tikzpicture}
\end{center}

\begin{figure}[h]
\begin{center}
\textbf{Reduction rules:}\\
\begin{verbatim}
-- kripke structures
state "m1" = 1
trans "m1" 1 = 1
val "m1" 1 = 'p'

state "m2" = 2
state "m2" = 3
trans "m2" 2 = 2
trans "m2" 2 = 3
trans "m2" 3 = 2
trans "m2" 3 = 3
val "m2" 2 = 'p'
val "m2" 3 = 'p'

-- mechanism
bisim m1 w1 m2 w2 =
  let next1 = findall (trans m1 w1)
  and next2 = findall (trans m2 w2) in
    sameSet (findall (val m1 w1)) (findall (val m2 w2)) &&
    forall (\v1 -> (exists (\v2 -> bisim m1 v1 m2 v2) next2) next1) &&
    forall (\v2 -> (exists (\v1 -> bisim m1 v1 m2 v2) next1) next2)

and [] = True
and (x:xs) = x && (and xs)

forall f xs = and (map f xs)
exists f xs = not (and (map (\x -> not (f x)) xs))

sameSet xs ys = (subSet xs ys) && (subSet ys xs)
subSet [] _ = True
subSet (x:xs) ys = (elem x ys) && subSet xs ys
\end{verbatim}

\vspace{5pt}\textbf{Assumption possibilities:}\\
\begin{tabular}{l l l}
\verb|and _| & $\leadsto$ & \verb|{True}| \\
\verb|bisim _ _ _ _| & $\leadsto$ & \verb|{True}| \\
\end{tabular}
\end{center}
\caption{Example program $\mathcal{P}_3$.}
\label{fig:ex3}
\end{figure}

We can make the following derivations by using program $\mathcal{P}_3$. These example derivations illustrate that the direct encoding of the bisimulation conditions directly lead to a decision procedure.

\vspace{10pt}\begin{tabular}{l l l l}
\textbf{term}&&\textbf{value}&\textbf{substitution}\\
\verb|bisim "m1" 1 "m2" 2| &$\rightarrow\kleene$& \verb|True|\\
\verb|bisim "m1" 1 "m2" x| &$\rightarrow\kleene$& \verb|True| & \verb|{x -> 3}|\\
\end{tabular}\vspace{10pt}

Similarly to the example of B\"uchi automata given above, it is possible to encode bisimulation of Kripke structures in regular Curry, but not nearly as directly and straightforwardly as in our approach.

%%% EXAMPLE 4 %%%
\subsection{Manipulation of infinite lists}

In the program $\mathcal{P}_4$ in Figure~\ref{fig:ex4}, we consider manipulation of infinite lists. This example shows that we can directly use the definition of several list manipulation operators from the finite case also in the infinite case (with regular terms), with the intended interpretation.

\begin{figure}[h]
\begin{center}
\textbf{Reduction rules:}\\
\begin{verbatim}
zip [] ys = ys
zip (x:xs) (y:ys) = x:y:(zip xs ys)

odd [] = []
odd (x:xs) = x:(even xs)

even [] = []
even (x:xs) = odd xs

ones = 1:ones
twos = 2:twos

nat n = n:(nat (n+1))
\end{verbatim}

\vspace{5pt}\textbf{Assumption possibilities:}\\
\begin{tabular}{l l l}
\verb|zip _ _| & $\leadsto$ & \verb|{_}| \\
\verb|odd _| & $\leadsto$ & \verb|{_}| \\
\end{tabular}
\end{center}
\caption{Example program $\mathcal{P}_4$.}
\label{fig:ex4}
\end{figure}

We can make the following derivations by using program $\mathcal{P}_4$. These examples show that the operations \verb|zip|, \verb|even| and \verb|odd| can be naturally used for infinite lists as well as finite lists. An equally natural encoding of such operations on infinite lists in regular Curry is not possible.

\vspace{10pt}\begin{tabular}{l l l l}
\textbf{term}&&\textbf{value}&\textbf{substitution}\\
\verb|zip ones twos| &$\rightarrow\kleene$& \verb|1:2:y {y -> 1:2:y}|\\
\verb|zip ones ones| &$\rightarrow\kleene$& \verb|1:1:y {y -> 1:1:y}|\\
\verb|odd (zip ones x)| &$\rightarrow\kleene$& \verb|1:y {y -> 1:y}|&(see below)\\
\verb|even (zip x twos)| &$\rightarrow\kleene$& \verb|2:y {y -> 2:y}|&(see below)\\
\verb|odd ones| &$\rightarrow\kleene$& \verb|1:y {y -> 1:y}|\\
\verb|even ones| &$\rightarrow\kleene$& \verb|1:y {y -> 1:y}|\\
\end{tabular}\vspace{10pt}

No derivation for \verb|odd (zip ones x)| and \verb|even (zip x twos)| leading to the mentioned values has an empty substitution. Each derivation results in a substitution mapping \verb|x| to an infinite list (representable by a cyclic term). Furthermore, for every substitution mapping \verb|x| to any cyclic term representing an infinite list (containing only fresh variables), there is a derivation leading to the mentioned values with this substitution.

Note that the term \verb|nat n|, for any integer \verb|n|, is infinite and non-regular. Our computational strategy is not suited to reason about such terms. This example illustrates this. We have that evaluating the following terms results in a diverging derivation. Similar effects occur when such non-regular infinite terms are used in the previous examples.

\vspace{10pt}\begin{tabular}{l l l}
\textbf{term}\\
\verb|odd (zip ones (nat 1))| &$\rightarrow\kleene$& \ldots\\
\verb|even (zip (nat 1) twos)| &$\rightarrow\kleene$& \ldots\\
\end{tabular}

%%%
%%% DECLARATIVE SEMANTICS
%%%
\section{Declarative semantics}\label{sec:declsem}

Naturally, when modifying the operational semantics to interpret programs coinductively, we would like to change the denotational, or declarative, semantics accordingly. In this section, we suggest a possibility for suitable denotational semantics. Also, by means of several examples we illustrate how this suggested semantics differs from the inductive case. Furthermore, these examples serve to illustrate the suitability of the suggested semantics for the coinductive case.

For inductively interpreted functional logic programs, initial algebra denotational semantics is well-suited. Dually, for coinductively interpreted functional logic programs, we suggest a final coalgebra denotational semantics. For a general background on algebrae and coalgebrae, see for instance \cite{Jacobs:1997p238}. Consider the following (partial) signature specifying a particular type \textit{List}:
\[ List \rightarrow []\ |\ \mathbb{N} : List\ ; \]
Intuitively, in the inductive case, terms (of type $List$) correspond to finite lists of natural numbers. In the coinductive case, intuitively, cyclic terms (of type $List$) correspond to finite or infinite lists of natural numbers.

A suitable denotational semantics for terms of type $List$ is the initial $F$-algebra on the category of sets $\textbf{Set}$, where the corresponding functor $F : \set{\bot} \cup (\mathbb{N} \times X) \rightarrow X$ is derived directly from the signature. Call this initial algebra $I\!A$.\footnote{There are more such initial $F$-algebrae on $\textbf{Set}$, but they are all isomorphic.} Terms correspond to elements of $I\!A$. In fact, the set of finite lists on $\mathbb{N}$ is a suitable initial algebra. Now, functions from terms to natural numbers get the denotation of an $F$-algebra on $\mathbb{N}$, functions from terms to terms get the denotation of an $F$-algebra on $I\!A$. For instance, the (Curry encoding of the) function $length : List \rightarrow \mathbb{N}$ returning the length of a list would denote the algebra $(\mathbb{N},\alpha)$, given by:
\[ \alpha : \set{\bot} \cup (\mathbb{N} \times \mathbb{N}) \rightarrow \mathbb{N} \]
\[ \bot \mapsto 0 \quad (x,n) \mapsto 1 + n \]
By initiality of $I\!A$, there is exactly one morphism from $I\!A$ to this algebra on $\mathbb{N}$, which coincides with the function returning the length of lists in $I\!A$. Also, for instance, pairs of terms of type $List$ are assigned the denotation of elements in the algebra $(I\!A \times I\!A)$. In an analogous fashion, denotational semantics in algebraic terms can be assigned to the complete program.

The suggested final coalgebra semantics for cyclic terms of type $List$ are completely dual to the initial algebra semantics for the inductive case. In this semantics, cyclic terms denote elements of the final $G$-coalgebra on $\textbf{Set}$, where the corresponding functor $G : X \rightarrow \set{\bot} \cup (\mathbb{N} \times X)$ is derived directly from the signature. Call this final coalgebra $FC$.\footnote{There are more such final $G$-coalgebrae on $\textbf{Set}$, but they are all isomorphic.} Cyclic terms correspond to elements of $FC$. In fact, the set of all finite and infinite lists on $\mathbb{N}$ is a suitable final coalgebra. Assigning denotations to functions works dually to the inductive case. Functions from natural numbers to cyclic terms are $G$-coalgebrae on $\mathbb{N}$, and functions from cyclic terms to cyclic terms are $G$-coalgebae on $FC$. For instance, the function \verb|repeat| from natural numbers to cyclic terms given by \verb|repeat n = (n::(repeat n))| would denote the coalgebra $(\mathbb{N},\beta)$, given by:
\[ \beta : \mathbb{N} \rightarrow \set{\bot} \cup (\mathbb{N} \times \mathbb{N}) \]
\[ n \mapsto (n,n) \]
By finality of $FC$, there is exactly one morphism from this coalgebra on $\mathbb{N}$ to $FC$, which coincides with the function mapping any natural number to the infinite list containing only this number. Also, for instance, pairs of cyclic terms are assigned the denotation of elements in the coalgebra $(FC \times FC)$. Again, analogously, the whole program can be assigned denotational semantics in coalgebraic terms.

The above exposition is a gross oversimplification, of course. Things get more intricate, for instance, when a program is interpreted partially inductively and partially coinductively. Further research is needed on the topic of denotational semantics. We merely suggest a direction for research in this area.

\section{Conclusions}\label{sec:conclusions}
We showed how functional logic programming can be adapted to interpret programs coinductively as well as inductively. We singled out a particular class of (possibly infinite) objects interesting for this kind of reasoning, namely regular terms, and showed how the usual data structures can be modified to capture these objects. We showed how the operational semantics of the functional logic programming language Curry can be altered to allow for coinductive reasoning, and suggested how a suitable declarative semantics can be obtained. Furthermore, we illustrated the working and usefulness of our methods with several examples.

Working out a declarative semantics in full detail, and relating this semantics to the operational semantics, would be a topic of further research. Another direction for further research would be to investigate whether and how the computational mechanisms used in this paper could be optimized.

\bibliographystyle{ieeetr}
\bibliography{bibliography}

\listoftodos

\pagebreak
\appendix
\section{Example derivations}\label{sec:examplederivations}
In Figure~\ref{fig:examplederivations}, we include several derivations for the example programs given in Section~\ref{sec:examples}. For each derivation, we indicate certain (meaningful) steps in the derivation, together with the cases of the function $cs$ used to get from the previous step to this step (indicated in the column titled \textbf{case of $cs$}), and the (partial) resulting substitution calculated in the derivation from the previous step to this step (indicated in the column titled $\sigma$).

\begin{figure}[h!]
\begin{footnotesize}
\begin{center}

\begin{tabular}{l l l l}
&\textbf{term}&\textbf{case(s) of $cs$}&$\sigma$\\
&\verb|forall isOdd ones|&\\
$\rightarrow$&\verb|and (map isOdd ones)|&(3.1)\\
$\rightarrow$&\verb|and ((isOdd 1):(map isOdd ones))|&(3.1)\\
$\rightarrow$&\verb|and (True:(map isOdd ones))|&(3.1)\\
$\rightarrow$&\verb|True && (and (map isOdd ones))|&(3.1)\\
$\rightarrow$&\verb|and (map isOdd ones)|&(3.1)\\
$\rightarrow\kleene$&\verb|True|&(3.2), (2), (3.1)\\
\\ \\
&\textbf{term}&\textbf{case(s) of $cs$}&$\sigma$\\
&\verb|append ones [2,3]|&\\
$\rightarrow$&\verb|1:(append ones [2,3])|&(3.1)\\
$\rightarrow$&\verb|(x1 == 1:x1) => x1|&(3.2)\\
$\rightarrow$&\verb|True => y1 {y1 -> 1:y1}|&(2)&\verb|{x1 -> y1 {y1 -> 1:y1}}|\\
$\rightarrow$&\verb|y1 {y1 -> 1:y1}|&(3.1)\\
\\ \\
&\textbf{term}&\textbf{case(s) of $cs$}&$\sigma$\\
&\verb|accept x|&\\
$\rightarrow$&\verb|inf (==final) (path initial x)|&(3.1)\\
$\rightarrow$&\verb|inf' (==final) False (loop (path initial x))|&(3.1)\\
$\rightarrow$&\verb|inf' (==final) False (loop (1:(path 2 x2)))|&(3.1)&\verb|{x -> (a:x2)}|\\
$\rightarrow$&\verb|inf' (==final) False (loop (path 2 x2))|&(3.1)&\\
$\rightarrow$&\verb|inf' (==final) False (loop (2:(path 2 x3)))|&(3.1)&\verb|{x2 -> (b:x3)}|\\
$\rightarrow$&\verb|inf' (==final) False (loop (path 2 x3))|&(3.1)&\\
$\rightarrow\kleene$&\verb|inf' (==final) False (path 2 y1) {y1 -> b:y1}|&(3.2), (2), (3.1)&\verb|{x3 -> y1 {y1 -> b:y1}}|\\
$\rightarrow$&\verb+inf' (==final) ((2==final) || False) (path 2 y1)+&(3.1)&\\
$\rightarrow\kleene$&\verb|inf' (==final) True (path 2 y1) {y1 -> b:y1}|&(3.1)&\\
$\rightarrow\kleene$&\verb|inf' (==final) True (path 2 y1) {y1 -> b:y1}|&(3.1)&\\
$\rightarrow\kleene$&\verb|True|&(3.2), (2), (3.1)&\\
\\ \\
&\textbf{term}&\textbf{case(s) of $cs$}&$\sigma$\\
&\verb|odd (zip ones x)|&\\
$\rightarrow$&\verb|odd (1:x1:(zip ones xs1))|&(3.1)&\verb|{x -> (x1:xs1)}|\\
$\rightarrow\kleene$&\verb|1:(odd (zip ones xs1))|&(3.1)&\\
$\rightarrow$&\verb|(x2 == 1:x2) => x2|&(3.2)&\verb|{x -> y1, xs1 -> y1|\\
&&&\verb|         {y1 -> x1:y1}}|\\
$\rightarrow$&\verb|True => y2 {y2 -> 1:y2}|&(2)&\verb|{x2 -> y2 {y2 -> 1:y2}}|\\
$\rightarrow$&\verb|y2 {y2 -> 1:y2}|&(3.1)&\\
\end{tabular}

\end{center}
\end{footnotesize}
\caption{Example derivations.}
\label{fig:examplederivations}
\end{figure}

\end{document}